\newcommand{\DS}{\textsc{DS}}
\newcommand{\CDS}{\textsc{CDS}}
\newcommand{\IDS}{\textsc{IDS}}
\newcommand{\EDS}{\textsc{EDS}}
\newcommand{\RD}{\textsc{RD}}
\newcommand{\IRD}{\textsc{IRD}}
\newcommand{\PRD}{\textsc{PRD}}
\newcommand{\NP}{\textsf{NP}}
\newcommand{\FPT}{\textsf{FPT}}
\newcommand{\FPSUB}{\textsf{FPSUB}}
\newcommand{\WO}{\textsf{W[1]}}
\newcommand{\WT}{\textsf{W[2]}}
\newtheorem{lemma}{Lemma}
\newtheorem{theorem}{Theorem}
\newtheorem{corollary}{Corollary}
\newtheorem{construction}{Construction}
\def\empty{} 
        \def\arg{#2}%
        \def\arg{#1}%
        \def\argo{#2}%
        \def\arg{#1}%
        \def\arstyle{#3}%
        \def\arr{\arrow[>=stealth]{>}}
        \def\arr{#3}
\tikzset{
    nodes/.style n args={4}{
        draw ,circle,outer sep=0.5mm,
        /utils/set if not empty={/tikz/fill}{#1}{black},
        /utils/set if not empty={/tikz/minimum size}{#4}{5},
        /utils/set if labelpos not empty={#2}{#3},
        line width = 0.5pt,
},
    arc/.style n args={3}{
        postaction={
            decorate,
            decoration={markings,
                /utils/set if arrowpos not empty={#1}{1}{}%
            }
        },
        /utils/set if not empty={/tikz/line width}{#2}{0.7pt},
        {#3}
    }
}
\tikzset{shorten >= 1pt}
\definecolor{Neutral}{RGB}{0,0,0}
\definecolor{Gray}{RGB}{123,123,123}
\definecolor{Blue}{RGB}{86,180,233}
\definecolor{Red}{RGB}{221,0,0}
\definecolor{Green}{RGB}{8,212,84}
\definecolor{Pink}{RGB}{236,149,198}
\definecolor{Purple}{RGB}{171,56,210}
\definecolor{Yellow}{RGB}{243,212,17}
\definecolor{Orange}{RGB}{230,103,0}
\definecolor{Brown}{RGB}{140,63,4}
\newcommand{\prob}[3]
{
\begin{quote}
  {\sc #1}
  \nopagebreak
  \vspace{-0.1cm}
  \begin{description}
  \item[\textbf{Input:}] #2
  \item[\textbf{Question:}] #3
  \end{description}
\end{quote}
}
\begin{document}

\title{Domination in Diameter-Two Graphs and the 2-Club Cluster Vertex Deletion Parameter}
\author{Faisal N. Abu-Khzam and Lucas Isenmann\\
Department of Computer Science and Mathematics\\
Lebanese American University\\
Beirut, Lebanon}

\date{}
\maketitle

\thispagestyle{empty}



\begin{abstract}

The $s$-club cluster vertex deletion number of a graph, or sccvd, is the minimum number of vertices whose deletion results in a disjoint union of $s$-clubs, or graphs whose diameter is bounded above by $s$. 
We launch a study of several domination problems on diameter-two graphs, or 2-clubs, and study their parameterized complexity with respect to the 2ccvd number as main parameter. 
We further propose to explore the class of problems that become solvable in sub-exponential time when the running time is independent of some input parameter. Hardness of problems for this class depends on the Exponential-Time Hypothesis. We give examples of problems that are in the proposed class and problems that are hard for it.

\end{abstract}

\section{Introduction}

The parameterized complexity of a problem focuses mainly on the asymptotic running time with respect to specific input parameters, which can provide insights into how the running time can be improved by selecting appropriate parameter(s). This type of algorithmic analysis reveals the potential for more efficient algorithms tailored to instances where some (chosen) parameters are small, thus enhancing the overall tractability of the problem.

Studying a problem with respect to a variety of possible input parameters has witnessed great increase in attention and there has been several studies in the recent literature \cite{Banik2023,  FELLOWS2013541,Goyal-Raman2024,Bart2014}. 
The relation between the various parameters is of interest by itself. For example, the deletion into bounded degree parameter (dbd) is more general than the vertex cover parameter (vc), thus a problem that is fixed-parameter tractable (\FPT) with respect to dbd is also \FPT\ with respect to vc. Moreover, if a problem does not admit a polynomial size kernel (modulo some hierarchy condition) with respect to vc, then it is hard to ``kernelize'' with respect to dbd. In general, given a problem $X$ that can be parameterized by any of two parameters $k$ and $t$, we say that parameter $k$ is smaller than parameter $t$ if and only if the family of Yes-instances for $t$ is a subset of the family of Yes-instances for $k$. As such, it would be more interesting from a complexity standpoint to study the complexity with respect to the smaller parameter.

There are many generalizations of the {\sc Vertex Cover} problem that have resulted in smaller parameters. A notable example is the cluster vertex deletion parameter (cvd): number of vertices whose deletion yields a disjoint union of cliques (of course, {\sc Vertex Cover} corresponds to the special case where each of the resulting cliques is of size one). Recently, the parameterized complexity of a variety of domination problems was studied with respect to cvd \cite{Goyal-Raman2024}. In this paper, we further refine the cluster deletion parameter by considering the $s$-club vertex deletion parameter (sccvd): number of vertices whose deletion results in a disjoint union of $s$-clubs, or graphs of diameter at most $s$. In particular, we consider the 2-club cluster vertex deletion parameter, or 2ccvd, being the very next parameter in the considered chain of ``deletion into $s$-clubs'' parameters for $s = 1,2 \ldots$

We mainly consider domination problems by studying several variants of the classic dominating set problem, namely: {\sc Efficient Dominating Set}, {\sc Independent Dominating Set} and a variety of {\sc Roman Domination} problems. To study parameterization with respect to 2ccvd, we first consider the complexity of each of these problems on graphs of diameter (at most) two. And then obtain the resulting parameterized complexity of the problem with respect to 2ccvd. 
Finally, we propose a study that can further classify problems from a parameterized complexity viewpoint, by considering problems that admit sub-exponential time algorithms when some input parameter(s) are small enough, or constant.

\section{Preliminaries}

Basic graph theoretic terminology is used throughout this paper. All the considered graphs are assumed to be undirected, simple and unweighted. The neighborhood of a vertex $v$ of a graph $G=(V,E)$ is the set of vertices adjacent to $v$ in $G$: $N(v)=N^1(v) =\{w\in V: uw\in E\}$, and its closed neighborhood is $N[v]=N^1[v] = N(v)\cup \{v\}$. 
This notion can be extended to sets by defining, for $S\subset V, N(S) = N^1(S) = \bigcup_{v\in S} N(v)$ and $N[S] = N^1[S] = \bigcup_{v\in S} N[v]$. Furthermore, for $j > 1$, $N^j[v] = N[N^{j-1}(v)]$ and $N^j(v) = N^j[v] \setminus \{v\}$. In other words, $N^j[v]$ is the set of vertices that are within distance $j$ from $v$. 

A graph $G=(V,E)$ is an $s$-club if for every vertex $v\in V, N^s[v] = V$, i.e., every pair of vertices of $G$ are within a distance of $s$ from each other. The subgraph induced by a set $S\subset V$ will be denoted by $G[S]$.
A dominating set in $G=(V,E)$ is a set $D\subset V$ such that $N[D]=V$. 
A set $D$ is said to be an {\em efficient dominating set} (EDS) if for every $v\in V, |N[v]\cap D| = 1$. 
It is an {\em independent dominating set} (IDS) if 
for every $v\in D, |N[v]\cap D| = 1$ (while $N[D]=V$). 
A dominating set $D$ is a connected dominating set of $G$ if $G[D]$ is connected. The {\sc $s$-Club Cluster Vertex Deletion} problem (sCCVD) is defined as follows:

\prob{$s$-Club Cluster Vertex Deletion}
{A graph $G=(V,E)$ and a non-negative integer $k$;}{Is there a set $S\subset V$ of cardinality at most $k$ such that $G-S$ is a disjoint union of $s$-clubs?}

The sCCVD problem seems to have been first formulated in \cite{SchaeferPhdThesis} where it was shown to be solvable in polynomial-time on trees. The 2CCVD problem was studied in \cite{Liu} where it was shown to be \NP-hard but fixed-parameter tractable by presenting an $\mathcal{O}^*(3.31^k)$ algorithm. The same was proved in \cite{Liu} for the edge-deletion version. 
Further improvements and similar problems were obtained/discussed in a number of recent papers \cite{abu2023, Niedermeier2021, Tsur2024}.

We define the $sccvd$ of a graph $G$ as the minimum integer $k$ for which sCCVD has a solution in $G$. We assume $s$ is a small constant so computing $sccvd(G)$ takes \FPT-time. In fact, the sCCVD problem is trivially \FPT\ when $s$ is a constant being solvable in $\mathcal{O}((s+1)^k)$ time: find a path $p$ of length $s$ between two vertices that are at distance exactly $s$ from each other; then branch by deleting one of the $s+1$ elements of $p$ in each case.

As mentioned in the introduction, we focus on the 2ccvd parameter in this paper, and we shall first consider variants of the {\sc Dominating Set} problem. 
We first note that the {\sc Dominating Set} (DS) and {\sc Connected Dominating Set} (CDS) problems are \textsf{para-NP}-hard with respect to sccvd, simply because these problems are \WT-hard in $2$-clubs \cite{lokshtanov2013hardness}. 
Therefore we focus on other variants of {\sc Dominating Set} that have not been studied on diameter-2 graphs. In particular, we consider {\sc Efficient Dominating Set} (EDS), {\sc Independent Dominating Set} (IDS), {\sc Roman Domination} (RD), {\sc Independent Roman Domination} (IRD) and {\sc Perfect Roman Domination} (PRD). A summary of the known and proved results is given in the below table.

\begin{table}[htb!]
    \centering
    \begin{tabular}{c|c|c}
        Problem(s) & On Diameter-$2$ graphs & Parameterized by 2ccvd \\
        \hline
        \DS, \CDS & \WT-hard \cite{lokshtanov2013hardness} & \textsf{para-NP}-hard  \\
        \IDS, \RD, \IRD & \WO-hard  (Theorems \ref{theorem:IDS_W1H_2clubs}, \ref{theorem:RD_W1} \& \ref{theorem:IRD_2clubs})  & \textsf{para-NP}-hard \\
        \EDS & Linear-time  & \FPT (Theorem \ref{theorem:EDS_FPT}) \\
        \PRD & \NP-hard (Theorem \ref{theorem:PRD}) & \textsf{para-NP}-hard
    \end{tabular}
    \caption{Summary of the complexities of domination problems parameterized by 2ccvd.}
    \label{tab:summary}
\end{table}

\section{{\sc Efficient Dominating Set} parameterized by 2ccvd}

We prove that {\sc Efficient Dominating Set} parameterized by 2ccvd is \FPT\ 
by exhibiting a fixed parameter tractable algorithm.

Let $G$ be a graph such that $2ccvd(G)\leq k$.
Then there exists a subset $S \subseteq V(G)$ of size $k$ or less such that $G - S$ is a disjoint union of 2-clubs that we denote by $C_1, \ldots, C_p$.
Let $S' \subseteq S$  be such that $S'$ is an EDS of $G[N[S']]$ (meaning that the sets $N[s]$ are disjoint for every $s \in S'$).
We define $U = S \setminus N[S']$ the vertices of $S$ not dominated by $S'$.
If there exists $i \in [1,p]$ such that $V(C_i) \subseteq N[S']$, then we can remove $C_i$.
Therefore we suppose that for every $i \in [1,p]$, we have $V(C_i) \not\subseteq N[S']$.

\begin{lemma}\label{lemma:partial_solution_extension}
    A subset $D$ of $V(G)$ is an EDS of $G$ such that $S' = D \cap S$ if and only if there exists $(v_1, \ldots, v_p) \in V(C_1) \times \cdots \times V(C_p)$ such that
    \begin{itemize}
        \item $D = S' \sqcup_{i \in [1,p]} \{v_i\}$;
        \item $U = \sqcup_{i \in [1,p]} (N[v_i] \cap U)$;
        \item $\forall i \in [1,p]$,  $V(C_i) \subseteq N[v_i] \cup N[S']$ and $N[v_i] \cap N[S'] = \emptyset$.
    \end{itemize}
\end{lemma}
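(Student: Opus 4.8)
The plan is to prove the characterization by verifying that each of the three bulleted conditions is exactly what it means for a set $D$ with $D \cap S = S'$ to be an EDS of $G$, broken up according to which part of the vertex set the domination constraint is being checked on: the vertices of $S$ already dominated by $S'$, the remaining vertices of $S$ (namely $U$), and the vertices inside each $2$-club $C_i$.

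First I would establish the forward direction. Suppose $D$ is an EDS of $G$ with $D \cap S = S'$. Since every vertex of $G$ is dominated exactly once and $S'$ already efficiently dominates $N[S']$ within $G$ (by the choice of $S'$ and the fact that $N[s]$ computed in $G[N[S']]$ agrees with $N[s]$ in $G$ for $s \in S'$), I would argue that $D \setminus S'$ must be disjoint from $N[S']$ — otherwise some vertex in $N[S']$ would be dominated twice. Then I would show $D$ contains exactly one vertex $v_i$ from each $C_i$: at least one, because the vertices of $C_i \setminus N[S']$ (which is nonempty by the standing assumption) need a dominator and their only possible dominators outside $N[S']$ lie in $C_i \cup S$, but a vertex of $S \setminus S'$ cannot be in $D$; at most one, because two vertices of $D$ in the same $2$-club would either be adjacent (violating efficiency) or, if non-adjacent, a common neighbor exists (diameter two) which would then be dominated twice — here I need to be a little careful, since the common neighbor could lie in $S$, but $D$ still dominates it at least twice, contradiction. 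Having pinned down $D = S' \sqcup \{v_1, \ldots, v_p\}$, the remaining two bullets are just rewrites of ``$D$ dominates everything exactly once'' restricted to $U$ and to each $C_i$: the $U$-condition says the closed neighborhoods $N[v_i] \cap U$ partition $U$ (cover it since $D$ dominates, disjointly since efficiency), and the $C_i$-condition says $v_i$ together with $S'$ covers $C_i$ and $N[v_i]$ avoids $N[S']$ (the latter already noted, the former from domination of $C_i$).

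For the converse, I would take a tuple $(v_1, \ldots, v_p)$ satisfying the three conditions and set $D = S' \sqcup \{v_1, \ldots, v_p\}$, then check that $|N[v] \cap D| = 1$ for every $v \in V(G)$. Split on cases: if $v \in N[S']$ then it is dominated by its unique $S'$-dominator, and it is not dominated by any $v_i$ since $N[v_i] \cap N[S'] = \emptyset$; if $v \in U = S \setminus N[S']$ then the $U$-partition condition gives exactly one $i$ with $v \in N[v_i]$ and no $s \in S'$ dominates $v$ by definition of $U$; if $v \in C_j$ then $v \in N[v_j] \cup N[S']$, but $v \notin N[S']$ would follow only if $v$ is not dominated by $S'$ — actually $v$ may be in $N[S']$, in which case it falls in the first case, so assume $v \in C_j \setminus N[S']$, whence $v \in N[v_j]$; uniqueness within $C_j$ needs that $v$ is not also in $N[v_i]$ for $i \neq j$, which holds because $v_i \in C_i$ and distinct $2$-clubs are non-adjacent and share no vertices, and not in $N[S']$ by assumption. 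I should also double-check the edge case $v = v_j$ itself and the consistency that the $v_i$ are genuinely added as new elements (disjoint from $S'$), which follows from $N[v_i] \cap N[S'] = \emptyset$ implying $v_i \notin N[S'] \supseteq S'$.

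The main obstacle I anticipate is the uniqueness-within-a-$2$-club argument in the forward direction: ruling out two $D$-vertices in the same component $C_i$ requires using the diameter-two property to produce a common neighbor, and then arguing that common neighbor is double-dominated even though it might not lie in the component itself but rather in $S$. The cleanest way around this is to observe that efficiency of $D$ as an EDS of all of $G$ means $N[d] \cap N[d'] = \emptyset$ for any two distinct $d, d' \in D$; if $d, d'$ both lie in $C_i$ then either $dd' \in E$ (so $d' \in N[d] \cap N[d']$, contradiction) or, since $C_i$ has diameter two, there is a $w$ with $w \in N[d] \cap N[d']$, again a contradiction. This sidesteps worrying about where $w$ lives. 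Everything else is bookkeeping, and I would present it compactly as a case analysis on the location of an arbitrary vertex, in both directions.
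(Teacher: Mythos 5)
Your proposal is correct and follows essentially the same route as the paper: the forward direction pins down exactly one solution vertex per $2$-club via the common-neighbor/disjoint-closed-neighborhood argument together with the standing assumption that no $C_i$ is contained in $N[S']$, and the converse verifies the exact-domination condition by checking coverage and disjointness over $N[S']$, $U$, and the clubs. The per-vertex case analysis you use for the converse is just a reorganization of the paper's ``pairwise disjoint neighborhoods plus coverage'' argument, so there is no substantive difference.
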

\begin{proof}
    Suppose that $D$ is an EDS of $G$ such that $S' = D \cap S$.

    Let $i \in [1,p]$.
    Remark that for any two vertices $v$ and $w$ in $C_i$ we have $N[v] \cap N[w] \not=\emptyset$ because $d(v,w) \leq 2$.
    Therefore $|D \cap V(C_i)| \leq 1$.
    If $D \cap V(C_i) = \emptyset$, then we would have $V(C_i) \subseteq N[S']$ which is excluded.
    Therefore there exists $v_i \in V(C_i)$ such that $D \cap V(C_i) = \{v_i\}$.
    We deduce that $D = S' \sqcup_{i \in [1,p]} \{v_i\}$.
    As $D$ is an EDS, then the sets $N[v_i]$ are disjoint.
    As $U \cap N[S']$ is by definition empty, then $U$ must be covered by the neighbors of the vertices $v_1, \ldots, v_p$.
    Thus $U = \sqcup_{i \in [1,p]}( N[v_i] \cap U)$.

    Let $i \in [1,p]$.
    As $D$ is an EDS of $G$, then $N[v_i]$ and $N[S']$ are disjoint.
    Furthermore $V(C_i)$ can only be covered by the neighbors of $v_i$ or $S'$.
    Thus $ V(C_i) \subseteq N[v_i] \cup N[S']$.
    \newline
    
    Now suppose there exists $(v_1, \ldots, v_p) \in V(C_1) \times \cdots \times V(C_p)$ such that the previous properties are satisfied.
    Let us prove that $D$ is an EDS of $G$.
    Consider $i$ and $j$ in $[1,p]$.
    Then $N[v_i]$ and $N[v_j]$ does not intersect in the clusters $C_1, \ldots, C_p$ and in $U$.
    Furthermore they do not intersect $N[S']$.
    Thus $N[v_i]$ and $N[v_j]$ are disjoint.
    Therefore the sets $N[v_1], \ldots, N[v_j]$ and $N[S']$ are pairwise disjoint.
    Furthermore for every $i \in [1,p]$, $V(C_i)$ is covered by $v_i$ and $S'$.
    The set $U$ is covered by $v_1, \ldots, v_p$.
    We conclude that $D$ is an EDS. 
\end{proof}

We present a dynamic programming method for computing a minimum EDS solution, if it exists.
\newline

For every subset $W \subseteq U$ and $j \in [1,p]$ we define $T[W,j]$ as a Boolean which is true if and only if there exists $(v_1, \ldots, v_j) \in V(C_1) \times \cdots \times V(C_j)$ such that 
\begin{itemize}
    \item $W = \sqcup_{i \in [1,j]} (N[v_i] \cap U)$:
    \item $\forall i \in [1,j]$, $V(C_i) \subseteq N[v_i] \cup N[S']$ and $N[v_i] \cap N[S'] = \emptyset$.
\end{itemize}

Because of Lemma~\ref{lemma:partial_solution_extension}, there exists an EDS, denoted $E$, of $G$ such that $S' =E \cap S$ if and only if $T[U, p]$ is true.

\begin{lemma}
    For every $W \subseteq U$ and $2 \leq j \in [1,p]$,
    $$ T[W, j] = \lor_{v \in V(C_j)_{S'}} T[W - N[v], j-1] $$
    where $V(C_j)_{S'} = \{ v \in V(C_j) \mid V(C_j) \subseteq N[v_j] \cup N[S'] \land N[v_j] \cap N[S'] = \emptyset \}$.
\end{lemma}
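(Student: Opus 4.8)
The plan is to prove the claimed Boolean identity by verifying both implications, unfolding the definition of $T[\cdot,\cdot]$ on each side. Throughout I would write $Z_i := N[v_i] \cap U$ for the ``$U$-footprint'' of a choice $v_i$, and use that, since $W \subseteq U$, one has $W \setminus N[v] = W \setminus (N[v]\cap U)$; also note that $V(C_j)_{S'}$ is exactly the set of $v \in V(C_j)$ satisfying the second bullet in the definition of $T$ at index $j$, so ``$v_j \in V(C_j)_{S'}$'' is precisely the cluster-coverage part of the $T$-witness condition at coordinate $j$.

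First I would prove ``$T[W,j] \Rightarrow$ RHS''. Assume $T[W,j]$, witnessed by $(v_1,\dots,v_j) \in V(C_1)\times\cdots\times V(C_j)$, so that $W = \sqcup_{i\in[1,j]} Z_i$ as a disjoint union and every $v_i$ meets the cluster-coverage condition. Then $v_j \in V(C_j)_{S'}$, and since $Z_j$ is one block of the partition of $W$ we get $Z_j \subseteq W$ and $W \setminus N[v_j] = W \setminus Z_j = \sqcup_{i\in[1,j-1]} Z_i$. The truncated tuple $(v_1,\dots,v_{j-1})$ still satisfies the cluster-coverage conditions at coordinates $1,\dots,j-1$, hence certifies $T[W - N[v_j], j-1]$; so the disjunct for $v = v_j$ on the right-hand side is true.

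Next I would prove ``RHS $\Rightarrow T[W,j]$''. Suppose some $v \in V(C_j)_{S'}$ has $T[W - N[v], j-1]$ true, witnessed by $(v_1,\dots,v_{j-1})$, so $W \setminus N[v] = \sqcup_{i\in[1,j-1]} Z_i$ with the cluster-coverage conditions holding for $i < j$. Put $v_j := v$; the cluster-coverage condition at coordinate $j$ is exactly $v \in V(C_j)_{S'}$. It remains to check $W = \sqcup_{i\in[1,j]} Z_i$. For disjointness: $Z_1,\dots,Z_{j-1}$ are mutually disjoint by the sub-witness, and each is contained in $W \setminus N[v]$, hence disjoint from $N[v] \supseteq Z_j$. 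For the union, $\bigcup_{i\in[1,j]} Z_i = (W \setminus N[v]) \cup Z_j$; since $W \cap N[v] \subseteq U \cap N[v] = Z_j$ this union contains $W$, and it is contained in $W$ provided $Z_j \subseteq W$. This last inclusion is the one point needing care: it is not implied by $T[W - N[v], j-1]$ alone, so the disjunction in the statement is to be read with the standard convention that the entry $T[W - N[v], j-1]$ counts as false unless $N[v] \cap U \subseteq W$ (equivalently, one restricts the disjunction to those $v \in V(C_j)_{S'}$ with $N[v] \cap U \subseteq W$); under that reading the inclusion holds and we conclude $T[W,j]$.

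The only genuine obstacle is thus the bookkeeping of the partition constraint on $W$ in the backward direction — guaranteeing that re-inserting the block $N[v]\cap U$ reproduces $W$ exactly rather than a proper superset; once the convention above is made explicit the remainder is a routine unfolding of definitions, and the base case $j = 1$, which the recurrence does not cover, is read off directly from the definition of $T[W,1]$.
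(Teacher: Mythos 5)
Your argument is correct and is essentially the paper's own: unfold the definition of $T$ in both directions, check that the $U$-footprints $N[v_i]\cap U$ stay pairwise disjoint, and reassemble the partition of $W$. The point of care you isolate, however, is genuine and is exactly the step the paper's proof slides over: in the direction from the disjunction to $T[W,j]$, the paper claims ``by double inclusion'' that $W=\sqcup_{i\in[1,j]}(N[v_i]\cap U)$, but it only proves the inclusion $W\subseteq\bigcup_{i\in[1,j]}(N[v_i]\cap U)$; the reverse inclusion requires precisely your condition $N[v]\cap U\subseteq W$, which does not follow from $T[W\setminus N[v],j-1]$ being true. Without that side condition the recurrence as written can set intermediate entries to true spuriously -- for instance, if the only admissible footprints are $N[v_1]\cap U=\{u_1\}$ and $N[v_2]\cap U=\{u_2,u_3\}$, the recurrence declares $T[\{u_1,u_2\},2]$ true although $\{u_1,u_2\}$ admits no such partition, and such false positives can then propagate up to $T[U,p]$ via later clusters. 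So restricting the disjunction to those $v\in V(C_j)_{S'}$ with $N[v]\cap U\subseteq W$ (equivalently, adding this test in Algorithm~1) is a needed correction, not mere bookkeeping; I would only quibble that presenting it as a ``standard convention'' understates it, since $T[W\setminus N[v],j-1]$ is a perfectly well-defined (and possibly true) entry even when $N[v]\cap U\not\subseteq W$ -- state it as an amendment to the recurrence. Note also that the forward direction is unaffected, since there $N[v_j]\cap U$ is a block of the partition of $W$ and the side condition holds automatically, and your reading of the base case $j=1$ matches the paper.
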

\begin{proof}
    Let $W \subseteq U$ and $2 \leq j \in [1,p]$.
    Suppose that there exists $v_j \in V(C_j)_{S'}$ such that $T[W-N[v_j], j-1]$ is true.
    Then there exists $(v_1, \ldots, v_{j-1})$ in $V(C_1) \times \cdots \times V(C_{j-1})$ such that 
    \begin{itemize}
        \item $W-N[v_j] = \sqcup_{i \in [j-1]} (N[v_i] \cap U)$;
        \item $\forall i \in [j-1]$,  $V(C_i) \subseteq N[v_i] \cup N[S']$ and $N[v_i] \cap N[S'] = \emptyset$.
    \end{itemize}  
    
    Let us prove that $W = \sqcup_{i \in [1,j]} (N[v_i] \cap U)$.
    The subsets $N[v_i] \cap U$ are disjoint for any $i \in [j-1]$.
    Moreover, for every $i \in [j-1]$, $N[v_i] \cap U \subseteq W-N[v_j]$ thus $N[v_i] \cap U \cap N[v_j] = \emptyset$.
    Thus the subsets $N[v_i] \cap U$ are disjoint for any $i \in [j]$.
    Let $x \in W$.
    If $x \in W\setminus N[v_j]$, then $x \in  \sqcup_{i \in [j-1]} (N[v_i] \cap U)$ and then $x \in \sqcup_{i \in [j]} (N[v_i] \cap U)$.
    If $x \in N[v_j] \cap U$, then $x \in  \sqcup_{i \in [1,j]} (N[v_i] \cap U)$.
    By double inclusion, we have $W = \sqcup_{i \in [1,j]} (N[v_i] \cap U)$.
    Furthermore $\forall i \in [1,j]$, $V(C_i) \subseteq N[v_i] \cup N[S']$ and $N[v_i] \cap N[S'] = \emptyset$.
    We conclude that $T[W,j]$ is true.
    \newline

    Suppose that $T[W,j]$ is true.
    Then 
    $\exists (v_1, \ldots, v_j) \in V(C_1), \ldots , V(C_j)$ such that:
    \begin{itemize}
        \item $W = \sqcup_{i \in [1,j]} (N[v_i] \cap U)$;
        \item $\forall i \in [1,j]$, $N[v_i] \cap V(C_i) = V(C_i) - N[S']$ and $N[v_i] \cap N[S'] = \emptyset$.
    \end{itemize}
    
    Let us prove that $T[W-N[v_j], j-1]$ is true.
    In a similar way as before we prove that $W -N[v_j] = \sqcup_{i \in [j-1]} (N[v_i] \cap U) $.
    Furthermore remark that $\forall i \in [1,j-1]$, $v_i \in V(C_i)_{S'}$.
    We deduce that $T[W-N[v_j], j-1]$ is true.
    We conclude that
     $$ T[W, j] = \lor_{v \in V(C_i)_{S'} } T[W - N[v], j-1] \;. $$
\end{proof}

The following algorithm
tries to extend a subset $S'$ of $S$ to an EDS.

\begin{algorithm}
\caption{EDS-2CCVD fixed parameter tractable algorithm}
\begin{algorithmic}[1]
\Procedure{CanExtendPartialSolution}{$S'$}
    \State $U \gets S - (N[S'] \cap S)$
    \For{$W \subseteq U$}
        \For{$j \in [1,p]$}
            \State $T[W,j] \gets$ false
        \EndFor
    \EndFor
    \For{$v_1 \in V(C_1)$}
         \If{$V(C_1) \subseteq N[v_1] \cup N[S']$ and $N[v_1] \cap N[S'] = \emptyset$}
             \State $T[N[v_1],1] \gets$ true
        \EndIf
    \EndFor

    \For{$j \in [2,p]$}
        \For{$W \subseteq U$}
            \For{$v \in V(C_j) \mid V(C_j)  \subseteq N[v_j] \cup N[S']$ and $N[v] \cap N[S'] = \emptyset$}
                \If{$T[W-N[v], j-1]$}
                    \State $T[W,j] \gets $ true
                \EndIf
            \EndFor
        \EndFor 
    \EndFor
    \Return{T[U, p]}
\EndProcedure
\end{algorithmic}
\end{algorithm}

\begin{lemma}
The time complexity of Algorithm 1 is in $O(n^3 2^{|U|})$.
\end{lemma}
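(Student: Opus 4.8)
The plan is to bound the running time of each of the three phases of Algorithm~1 separately and then observe that the recursive filling phase dominates. Throughout I would fix the following implementation conventions, which are used implicitly in the stated bound: $G$ is stored so that adjacency queries cost $O(1)$ (e.g.\ an adjacency matrix); a subset $W\subseteq U$ is represented as a bitmask of length $|U|$, so that the table entry $T[W,j]$ is reached in $O(1)$ time by array indexing and the set difference $W\setminus N[v]$ is computed in $O(|U|)=O(n)$ time; and $N[S']$ is precomputed once. I would also record the two structural facts that make the counting work: the clusters $C_1,\dots,C_p$ are vertex-disjoint and nonempty, so $p\le n$ and $\sum_{i=1}^{p}|V(C_i)|\le n$, and in particular $|V(C_i)|\le n$ for every $i$.

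First, the initialization loop sets $T[W,j]\gets\text{false}$ for every $W\subseteq U$ and every $j\in[1,p]$, which is $p\cdot 2^{|U|}=O(n\,2^{|U|})$ assignments. Second, the base-case loop iterates over $v_1\in V(C_1)$; for each $v_1$ it tests whether $V(C_1)\subseteq N[v_1]\cup N[S']$ and $N[v_1]\cap N[S']=\emptyset$, which, using the precomputed $N[S']$ and the adjacency matrix, costs $O(n)$, and then performs at most one table update. Since $|V(C_1)|\le n$, this phase runs in $O(n^2)$ time. Both phases are dominated by the recursive phase below.

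The recursive phase is the triple nested loop over $j\in[2,p]$, over $W\subseteq U$, and over the vertices $v\in V(C_j)$ lying in $V(C_j)_{S'}$. I would first (optionally) precompute the sets $V(C_i)_{S'}$ for all $i$ in total time $O(n^2)$, so that the eligibility test is not repeated; in any case the number of executions of the loop body is at most $\sum_{j=2}^{p}2^{|U|}\,|V(C_j)|\le n\,2^{|U|}$, and even the cruder bound $p\cdot 2^{|U|}\cdot\max_j|V(C_j)|\le n^2\,2^{|U|}$ is polynomial times $2^{|U|}$. Each execution of the body computes $W\setminus N[v]$ in $O(n)$ time, reads the entry $T[W\setminus N[v],\,j-1]$ in $O(1)$ time, and possibly writes one entry; if the eligibility test is re-evaluated inside the loop rather than precomputed, it contributes at most an additional $O(n^2)$ factor. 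Multiplying the bound on the number of body executions by the $O(n)$ (resp.\ $O(n^2)$) cost per execution gives $O(n^3\,2^{|U|})$ for this phase, hence for the whole procedure.

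The only real subtlety — the step I would be most careful about — is precisely the choice of data structures that keeps every table access and every set operation within the claimed polynomial factors: representing subsets of $U$ as bitmasks so that $T$ is an ordinary array (giving $O(1)$ lookups) and so that $W\setminus N[v]$ is a single bitwise operation over $O(n)$ bits, together with the precomputation of $N[S']$ and, if desired, of the eligible-vertex sets $V(C_i)_{S'}$. Granting these routine choices the estimate is immediate; I would also remark that a slightly sharper accounting using $\sum_i|V(C_i)|\le n$ actually yields $O(n^2\,2^{|U|})$, so the stated bound $O(n^3\,2^{|U|})$ holds with room to spare.
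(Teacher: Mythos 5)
Your proof is correct and follows essentially the same route as the paper: bound the number of loop iterations by a polynomial times $2^{|U|}$ and multiply by a polynomial per-iteration cost, with the eligibility tests handled by precomputation. Your accounting is in fact slightly more careful (noting that disjointness of the clusters even yields $O(n^2\,2^{|U|})$), but the argument is the same.
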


\begin{proof}
By precomputing in $O(n^3)$ for all $j \in [1,p]$ the sets $V(C_j) \setminus N[v_j]$ and for all $v \in V(G)$, $N[v] \cap N[S'] = \emptyset$, then we can check in $O(n^2)$ if a vertex $v \in V(C_j)$ satisfies $V(C_j) \subseteq N[v_j] \cup N[S']$ and $N[v_j] \cap N[S'] = \emptyset$.
Thus the complexity of \texttt{CanExtendPartialSolution} is $O( n^3 + n 2^{|U|} + n  2^{|U|} n^2)$ which is in $O(n^3 2^{|U|})$.
\end{proof}

We now consider the following algorithm, which tries to extend every EDS of $N[S']$ to $G$ and returns the size of a minimum EDS of $G$.

\begin{algorithm}
\caption{EDS}
\begin{algorithmic}[1]
\Procedure{EDS}{}
    \State $r \gets +\infty$
    \For{$S' \subseteq S$}
        \If{$S'$ is an EDS of $N[S']$}
            \State $k \gets$ the number of $2$-clubs not dominated by $S'$
            \If{ CanExtendPartialSolution($S'$) and $|S'| +k <r$}
                \State $r \gets |S'|+k$
            \EndIf
        \EndIf
    \EndFor
    \Return{r}
\EndProcedure
\end{algorithmic}
\end{algorithm}

\begin{lemma}
The time complexity of Algorithm 2 is in $O(n^3 3^k)$.
\end{lemma}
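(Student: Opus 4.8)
The plan is to bound the total running time of Algorithm 2 by summing the cost of each iteration of the outer loop over subsets $S' \subseteq S$. For a fixed $S'$, the dominant cost is the call to \texttt{CanExtendPartialSolution}($S'$), which by the previous lemma runs in $O(n^3 2^{|U|})$ time, where $U = S \setminus N[S']$; the test ``$S'$ is an EDS of $N[S']$'' and counting the number of undominated $2$-clubs are both polynomial, hence absorbed into the $n^3$ factor. So the total time is $O\!\left(n^3 \sum_{S' \subseteq S} 2^{|U(S')|}\right)$, and it remains to show $\sum_{S' \subseteq S} 2^{|U(S')|} \le 3^{|S|} \le 3^k$.

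The key combinatorial step is to interpret the sum $\sum_{S' \subseteq S} 2^{|U(S')|}$ as counting pairs $(S', W)$ where $S' \subseteq S$ and $W \subseteq U(S') = S \setminus N[S']$. I would bound the number of such pairs by assigning to each vertex of $S$ one of three states: ``in $S'$'', ``in $W$'', or ``in neither''. Any valid pair $(S', W)$ yields such a $3$-coloring of $S$, and distinct pairs yield distinct colorings since $S'$ and $W$ are recovered as the first two color classes; moreover $S'$ and $W$ are disjoint because $W \subseteq S \setminus N[S'] \subseteq S \setminus S'$, so the assignment is well-defined. Hence the number of pairs is at most $3^{|S|}$. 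Since we may assume $|S| \le k$ (we take $S$ to be a minimum-size solution, or pad/truncate so that $|S| = 2ccvd(G) \le k$), this gives $\sum_{S' \subseteq S} 2^{|U(S')|} \le 3^k$, and the overall bound $O(n^3 3^k)$ follows.

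The main obstacle, such as it is, is purely bookkeeping: one must be careful that the $n^3$ factor genuinely dominates the non-recursive work in each outer iteration (checking the EDS property of $G[N[S']]$, and computing $k$, the number of $2$-clubs left undominated), and that the number $p$ of $2$-clubs and all the precomputations referenced in the proof of the previous lemma do not introduce extra factors of $n$ beyond $n^3$. None of this is delicate. The one genuine point to state cleanly is the three-way charging argument above, which is the standard ``$\sum_{A \subseteq X} 2^{|X \setminus A|} = 3^{|X|}$'' identity; making explicit that $W$ ranges over a subset of the complement of $S'$ inside $S$ is what licenses the bound by $3^k$ rather than something larger.
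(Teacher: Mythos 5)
Your proof is correct and takes essentially the same route as the paper's: the per-iteration cost is the $O(n^3 2^{|U|})$ bound from the previous lemma, and your three-way counting of pairs $(S',W)$ is exactly the identity the paper invokes as $\sum_{S' \subseteq S} 2^{-|S'|} = (3/2)^k$ (equivalently $\sum_{S' \subseteq S} 2^{|S\setminus S'|} = 3^{|S|}$), using $|U| \leq |S| - |S'| \leq k - |S'|$ since $S' \subseteq N[S']$. No substantive difference.
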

\begin{proof}
Given a subset $S' \subseteq S$, we can check in $O(k^2 n^2)$ if $S'$ is an EDS of $N[S']$ (equivalent to checking if all subsets $N[s]$ are pairwise disjoint for every $s \in S'$). As $|U| \leq k - |S'|$, we deduce that the complexity of the total algorithm is $O( n^3 3^k)$ because $\sum_{S' \subseteq S} 2^{-|S'|} = (3/2)^k$.
\end{proof}

To conclude:

\begin{theorem} \label{theorem:EDS_FPT}
    {\sc Efficient Dominating Set} parameterized by 2CCVD is \FPT.
\end{theorem}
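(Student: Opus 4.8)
The plan is to assemble the pieces already developed in the excerpt. Let $G$ be an instance with $2ccvd(G) \leq k$, and fix a witnessing deletion set $S$ with $|S| \leq k$ so that $G - S = C_1 \sqcup \cdots \sqcup C_p$ is a disjoint union of $2$-clubs. The first step is the key structural observation: any EDS $D$ of $G$ meets $S$ in some subset $S' = D \cap S$, and this $S'$ must itself be an EDS of the induced graph $G[N[S']]$ (the closed neighborhoods of its elements are pairwise disjoint, since $D$ is an EDS). So I would iterate over all $2^{|S|} \leq 2^k$ candidate subsets $S' \subseteq S$, discard those that fail this disjointness test, and for each surviving $S'$ try to extend it to a full EDS of $G$. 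This is exactly what Algorithm~2 does.

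Second, I would invoke Lemma~\ref{lemma:partial_solution_extension} to reduce ``extend $S'$ to an EDS of $G$'' to the existence of a tuple $(v_1,\dots,v_p) \in V(C_1)\times\cdots\times V(C_p)$ picking exactly one vertex per cluster, subject to three conditions: the chosen vertices' closed neighborhoods are disjoint from $N[S']$ and from each other, together with $S'$ they dominate every $V(C_i)$, and they partition the still-undominated part $U = S \setminus N[S']$ of the deletion set. The third step handles this extension problem by the dynamic program over subsets $W \subseteq U$ and prefixes $j \in [1,p]$: the table $T[W,j]$ records whether the first $j$ clusters can be assigned admissible vertices whose neighborhoods intersect $U$ in exactly $W$. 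The recurrence $T[W,j] = \bigvee_{v \in V(C_j)_{S'}} T[W - N[v], j-1]$ (proved in the lemma preceding Algorithm~1) is correct, and the base case is set from $C_1$; the answer for fixed $S'$ is $T[U,p]$.

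Fourth comes the running-time bookkeeping. For a fixed $S'$, Algorithm~1 runs in $O(n^3 2^{|U|})$, and since $|U| \leq |S| - |S'| \leq k - |S'|$, summing over all $S' \subseteq S$ gives $\sum_{S' \subseteq S} n^3 2^{k-|S'|} = n^3 \sum_{i=0}^{k} \binom{k}{i} 2^{k-i} = n^3 (1+2)^k = n^3 3^k$ — up to the polynomial overhead $O(k^2 n^2)$ per subset for the EDS-check, still $O(n^3 3^k)$. To finish, I note that if no $S' \subseteq S$ yields a successful extension then $G$ has no EDS with $D\cap S$ realizable — but every EDS does induce such an $S'$, so in that case $G$ has no EDS at all, and the algorithm correctly reports $+\infty$; otherwise it returns the minimum of $|S'| + (\text{number of clusters not dominated by } S')$, which by Lemma~\ref{lemma:partial_solution_extension} equals $|S'| + p'$ where each dominated cluster contributes nothing and each of the $p' = p$ remaining clusters contributes exactly one vertex $v_i$, i.e.\ the true size of the corresponding EDS. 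Hence {\sc Efficient Dominating Set} parameterized by 2CCVD is solved in time $O(n^3 3^k)$, and is therefore \FPT.

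The routine parts are the two inclusion arguments inside Lemma~\ref{lemma:partial_solution_extension} and the DP correctness lemma, both already in the excerpt. The only point needing a little care — and the place I'd expect a careless write-up to slip — is making sure the objective value $|S'| + k$ computed in Algorithm~2 (with $k$ there denoting the count of clusters not dominated by $S'$) really equals $|D|$ for the extension $D$ the DP certifies: this hinges on the fact that \emph{every} non-dominated cluster $C_i$ forces exactly one vertex $v_i$ into $D$ (at most one because any two vertices of a $2$-club have intersecting closed neighborhoods, at least one because $V(C_i) \not\subseteq N[S']$), so there is no double-counting and no undercount. Given the preparatory lemmas, no further obstacle remains.
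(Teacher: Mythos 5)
Your proposal is correct and follows essentially the same route as the paper: enumerate the subsets $S'\subseteq S$ whose closed neighborhoods are pairwise disjoint, use Lemma~\ref{lemma:partial_solution_extension} to characterize extensions, solve the extension problem by the subset dynamic program $T[W,j]$, and bound the total time by $\mathcal{O}(n^3 3^k)$ via $\sum_{S'\subseteq S}2^{k-|S'|}=3^k$. Your added remark on why the objective value $|S'|$ plus the number of non-dominated clusters equals the size of the extended EDS is exactly the intended (implicit) justification, so nothing is missing.
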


We can give a lower bound for the complexity of {\sc Efficient Dominating Set} parameterized by 2ccvd thanks to the following Theorem:

\begin{theorem}[Goyal \textit{et al.} \cite{goyal2024parameterized}]
    {\sc Efficient Dominating Set} parameterized by {\sc Vertex Cover} cannot be solved in $2^{o(|S|)}$ time unless ETH fails.
\end{theorem}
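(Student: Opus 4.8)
The plan is to prove this ETH lower bound by a linear reduction from $3$-SAT. By the Sparsification Lemma, the Exponential-Time Hypothesis implies that $3$-SAT on $n$ variables and $m = O(n)$ clauses has no algorithm running in time $2^{o(n)} = 2^{o(n+m)}$. It therefore suffices to build, in polynomial time, an \EDS\ instance $G_\phi$ from a formula $\phi$ such that $\phi$ is satisfiable if and only if $G_\phi$ admits an efficient dominating set, while keeping $|V(G_\phi)| = O(n+m)$. Any vertex cover of $G_\phi$ then has size at most $|V(G_\phi)| = O(n+m) = O(n)$, so a hypothetical algorithm running in $2^{o(|S|)}$ time, composed with the reduction, would solve $3$-SAT in $2^{o(n)}$ time, contradicting ETH.

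The design of $G_\phi$ should exploit the fact that the defining condition of an efficient dominating set, namely $|N[v]\cap D| = 1$ for every $v$, is itself an ``exactly one'' constraint; I would therefore route the construction through the exactly-one variant {\sc 1-in-3-SAT}, which inherits the $2^{o(n+m)}$ lower bound from $3$-SAT via the standard linear-size Schaefer-type reduction. From a 1-in-3 formula I would attach to each variable $x_i$ a constant-size selection gadget whose only admissible efficient-domination patterns correspond to the two truth values of $x_i$, the value being recorded by which gadget vertex is placed in the solution; and to each clause $c_j$ a constant-size gadget incident to the vertices representing its three literals, arranged so that a designated gadget vertex is dominated exactly once precisely when exactly one of these literals is set to true. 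Because over-domination and under-domination are both forbidden, the ``exactly once'' requirement then enforces the 1-in-3 semantics automatically. Guard and pendant vertices would be added so that every auxiliary vertex is forced to be covered exactly once in either configuration, pinning the legal solutions down to those that read off a consistent satisfying assignment.

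Correctness would be checked in the two usual directions: a satisfying assignment yields a set $D$ whose closed neighborhoods partition $V(G_\phi)$, and conversely any efficient dominating set, restricted to the variable gadgets, yields a consistent assignment that satisfies every clause in the exactly-one sense. Since each variable and each clause contributes only $O(1)$ vertices and edges, the whole graph has $O(n+m)$ vertices, which delivers the required vertex cover bound and closes the chain of implications.

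The main obstacle is the gadget engineering under the rigid exactly-once constraint. In contrast to ordinary domination, no vertex may be dominated twice, so the domination of every gadget vertex, including all connective and guard vertices, must balance precisely in both the ``true'' and ``false'' configurations. Simultaneously guaranteeing that the gadgets stay of constant size (so the vertex cover remains linear), that they faithfully encode the exactly-one logic, and that no spurious efficient dominating set bypasses the variable and clause constraints, is where essentially all of the technical effort lies.
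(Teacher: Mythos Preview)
The paper does not actually prove this theorem: it is quoted verbatim from Goyal \textit{et al.}\ and used only as a black box to derive the subsequent corollary about $s$ccvd. There is therefore no ``paper's own proof'' to compare against.

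Judged on its own, your plan is the standard and correct route. The chain ETH $\Rightarrow$ no $2^{o(n+m)}$ algorithm for (sparsified) $3$-SAT $\Rightarrow$ no $2^{o(n+m)}$ for $1$-in-$3$-SAT $\Rightarrow$ no $2^{o(|V|)}$ for \EDS\ on the reduced graph $\Rightarrow$ no $2^{o(|S|)}$ for any vertex cover $S$ is sound, and your observation that the ``exactly one'' semantics of \EDS\ matches $1$-in-$3$-SAT is exactly why this reduction is the natural one. The vertex-cover step is, as you implicitly use it, almost tautological: since $|V(G_\phi)|=O(n+m)$, any vertex cover (including the trivial one $V(G_\phi)$ itself) has size $O(n)$, which is all that is needed.

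The only real gap is that you have not supplied the gadgets, and you say so yourself. Everything hinges on that construction: the variable gadget must admit exactly two efficient-domination patterns, the clause gadget must be dominated exactly once by precisely one true literal, and all connector and guard vertices must be covered exactly once in every legal configuration, all while keeping each gadget of constant size. This is known (the classical \NP-hardness proofs for \textsc{Perfect Code}/\EDS\ do precisely this), so your plan can be completed, but as written it is a blueprint rather than a proof: the sentence ``the main obstacle is the gadget engineering'' is where the actual argument would have to go.
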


Since any vertex cover is $s$-club cluster vertex deletion set, we deduce that:

\begin{corollary}
    {\sc Efficient Dominating Set} parameterized by $s$ccvd cannot be solved in $2^{o(|S|)}$ time unless ETH fails.
\end{corollary}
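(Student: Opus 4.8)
The plan is to derive the corollary as a direct consequence of the cited theorem of Goyal \textit{et al.}, using the fact that every vertex cover of a graph is also an $s$-club cluster vertex deletion set. First I would observe that if $S$ is a vertex cover of $G$, then $G-S$ is an edgeless graph, hence a disjoint union of single vertices; each such single vertex trivially has diameter $0 \leq s$, so $G-S$ is a disjoint union of $s$-clubs, and therefore $sccvd(G) \leq |S|$. This means that any instance $(G,k)$ of {\sc Efficient Dominating Set} viewed through the vertex cover parameter can be viewed, with no increase in parameter value, through the $s$ccvd parameter: the identity reduction maps instances parameterized by vertex cover to instances parameterized by $s$ccvd with parameter at most as large.

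Next I would argue by contradiction: suppose {\sc Efficient Dominating Set} parameterized by $s$ccvd could be solved in $2^{o(sccvd(G))}$ time. Then, given any instance $(G,k)$ together with a vertex cover $X$ of $G$, we have $sccvd(G) \leq |X|$, so the hypothetical algorithm runs in $2^{o(|X|)}$ time on this instance. This would yield an algorithm for {\sc Efficient Dominating Set} parameterized by vertex cover running in $2^{o(|X|)}$ time, contradicting the theorem of Goyal \textit{et al.}, and hence contradicting ETH. Therefore no such algorithm exists unless ETH fails.

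The only subtlety worth spelling out, and the step I expect to require the most care, is the bookkeeping around parameter dependence in the $o(\cdot)$ notation: the lower bound of Goyal \textit{et al.} is stated in terms of the size of the vertex cover, and we must make sure that bounding $sccvd(G)$ from above by the vertex cover size is enough to transfer the lower bound downward to the (possibly smaller) $s$ccvd parameter. This is indeed the correct direction — a smaller parameter inherits the hardness of a larger one, exactly as noted in the introduction's discussion of parameter refinement — but one should phrase the $2^{o(\cdot)}$ statements carefully so that the composition of ``runs in $2^{o(sccvd)}$'' with ``$sccvd \leq$ vc'' legitimately gives ``runs in $2^{o(\mathrm{vc})}$''. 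Once that is in place, the corollary follows immediately.
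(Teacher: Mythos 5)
Your proposal is correct and follows exactly the paper's argument: the paper derives the corollary in one line from the observation that any vertex cover is an $s$-club cluster vertex deletion set, so the lower bound of Goyal \textit{et al.} for the (larger) vertex cover parameter transfers to the (smaller) $s$ccvd parameter. Your additional care about composing the $2^{o(\cdot)}$ bounds is sound but not a departure from the paper's route.
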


\section{Complexity of {\sc Independent Dominating Set} on $2$-clubs}

The problem {\sc Independent Dominating Set} is known to be \NP-complete \cite{garey1979computers}.
In order to study the complexity of the latter problem, we consider this variation of {\sc Independent Set}:

\prob{$k$-Multicolored Independent Set}
{A graph $G$ and a vertex coloring $c : V (G) \to \{1, 2, . . . , k\}$ for $G$;}
{Does G have an independent set including vertices of all $k$ colors? That is, are there $v_1, \ldots, v_k \in V(G)$ such that for all $1 \leq i < j \leq k$, $\{v_i, v_j\} \not\in E(G)$ and $c(v_i) \not= c(v_j)$?}

According to \cite{DBLP:journals/tcs/FellowsHRV09}, {\sc $k$-Multicolored Independent Set} is $W[1]$-complete.

\begin{theorem} \label{theorem:IDS_W1H_2clubs}
    {\sc Independent Dominating Set} is $W[1]$-hard on $2$-clubs.
\end{theorem}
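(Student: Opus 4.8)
The plan is to give a parameterized reduction from \textsc{$k$-Multicolored Independent Set} to \textsc{Independent Dominating Set} on a graph of diameter two, with the new parameter bounded by a function of $k$. Let $(G,c)$ be an instance of \textsc{$k$-Multicolored Independent Set}, with color classes $V_1,\ldots,V_k$. The high-level idea is standard for this type of hardness result: build a gadget whose independent dominating sets are forced to ``select'' exactly one vertex from each color class, and whose domination requirement simultaneously certifies that the selected vertices form an independent set in $G$.

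First I would take a disjoint copy of each $V_i$ inside the new graph $H$, turning each $V_i$ into a clique (so that any independent set of $H$ picks at most one vertex per class). Then, to guarantee diameter two and to force a selection, I would add a small number of ``selector'' or ``dominator'' vertices: for instance a single universal-type vertex or a constant-size set of auxiliary vertices adjacent to everything except a controlled part, so that (i) any two original vertices are at distance at most two, and (ii) an independent dominating set cannot simply consist of one auxiliary vertex — it is forced to enter each clique $V_i$. The classic trick here is to attach, to each color class, a private pendant-like structure (e.g.\ an edge or a $P_2$ whose private domination can only be achieved from inside $V_i$) so that exactly one vertex of each $V_i$ must belong to the solution; this pins the solution size to roughly $k$ plus a constant.

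The crucial step is encoding the edges of $G$. For each edge $\{u,v\}\in E(G)$ with $u\in V_i$, $v\in V_j$, I would introduce an edge-vertex $e_{uv}$ whose closed neighborhood forces that $u$ and $v$ cannot both be chosen: concretely, make $e_{uv}$ adjacent to every vertex of $V_i$ except $u$ and every vertex of $V_j$ except $v$ (plus whatever is needed for diameter two). Then an independent set that picks $u$ from $V_i$ and $v$ from $V_j$ fails to dominate $e_{uv}$, while any other choice dominates it; dominating all the $e_{uv}$'s therefore exactly expresses independence in $G$. One must check that adding these edge-vertices keeps the whole graph of diameter two (they are nonadjacent to at most one vertex per relevant class, and can be linked through the auxiliary/universal vertices), and that they do not create spurious small independent dominating sets — this is where care is needed, since $e_{uv}$ could itself be selected into an independent set. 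The usual fix is to make the edge-vertices mutually adjacent or adjacent to a common dominator so that selecting one of them is never part of a minimum (size-$(k+c)$) solution, or to allow them into the solution only in ways that still correspond to a valid multicolored independent set.

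Finally I would prove the two directions of correctness: a multicolored independent set $\{v_1,\ldots,v_k\}$ in $G$ yields an independent dominating set of $H$ of the target size (the $v_i$'s dominate their cliques and all incident edge-vertices, and the constant gadget part is handled separately), and conversely any independent dominating set of $H$ of that size must, by the pendant forcing, contain exactly one vertex $v_i$ per class $V_i$, and the domination of all edge-vertices $e_{uv}$ forces $\{v_1,\ldots,v_k\}$ to be independent in $G$. Since the new parameter (solution size, or equivalently the size of the forced part of the solution) is $O(k)$ and the reduction is polynomial, $W[1]$-hardness on diameter-two graphs follows from $W[1]$-completeness of \textsc{$k$-Multicolored Independent Set}. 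I expect the main obstacle to be the simultaneous juggling of three constraints on the gadget — keeping diameter exactly two, forcing precisely one pick per color class, and preventing edge-vertices (and auxiliary vertices) from forming cheap independent dominating sets that bypass the intended encoding — so the bulk of the work will be a careful case analysis showing no ``illegitimate'' independent dominating set of the target size exists.
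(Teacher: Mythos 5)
Your high-level plan (reduce from \textsc{$k$-Multicolored Independent Set}, turn each color class into a clique, add auxiliary vertices to get diameter two and to force one pick per class) is the right kind of strategy, but as written it is a program rather than a proof: the gadget is never actually specified, and every point at which the argument could fail is deferred (``plus whatever is needed for diameter two'', ``the usual fix'', ``this is where care is needed''). The gap is not cosmetic, because your edge-vertex encoding creates exactly the spurious-solution problem you name but do not resolve. Since $e_{uv}$ is non-adjacent to $u$ and to $v$, and (cross-class edges of $G$ having been dropped) $u$ and $v$ are non-adjacent in $H$, a cheating solution can select the forbidden pair $u,v$ and then dominate the offending edge-vertex by putting $e_{uv}$ itself into the independent dominating set; worse, your suggested fix of making the edge-vertices mutually adjacent means that a single selected edge-vertex dominates \emph{all} edge-vertices, voiding the entire edge encoding unless a tight budget-and-forcing argument excludes any edge-vertex from a solution of the target size. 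Symmetrically, every auxiliary vertex you add so that $e_{uv}$ is within distance two of the rest of the graph must be guaranteed never to dominate $e_{uv}$ in any feasible solution, which is in direct tension with its role as a near-universal vertex; reconciling these two roles, fixing the exact budget, and proving that no illegitimate independent dominating set of that size exists is precisely the content of the theorem, and none of it is carried out.

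For comparison, the paper's proof sidesteps edge-vertices altogether: it keeps the edges of $G$, so the independence of the chosen class vertices is enforced automatically by the independence of the dominating set itself, and no domination constraint has to encode $E(G)$. Its auxiliary structure (the vertices $x_i,y_i$ attached to each class, the central vertices $vx,\overline{vx},vy,\overline{vy},x_iy_j,\overline{x_iy_j}$, and the apex $u$) serves only to make $G'$ a 2-club and to force, via a twin argument and a counting argument against the budget $k+1$, that $u$ is in the solution, no central vertex is, and each clique contributes exactly one vertex. Either adopt that simpler encoding, or carry out in full the case analysis your sketch postpones; without one of the two, the reduction is not established.
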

\begin{proof}
    We reduce {\sc Independent Dominating Set} from {\sc $k$-Multicolored Independent Set}.
    Consider a graph $G$ whose vertices are colored with $k$ colors $1, \ldots, k$.
    For every $i \in [1,k]$, we denote by $C_i$ the subset of vertices colored $i$.
    We construct a graph $G'$ from $G$ as follows:
    \begin{itemize}
        \item Add edges between every pair of vertices having the same color (so that $C_i$ becomes a clique in $G'$ for every $i \in [1,k]$);
        \item For every $i \in [1,k]$, add two vertices $x_i$ and $y_i$ connected to all the vertices colored $i$;
        \item Add 4 vertices $vx$, $\overline{vx}$, $vy$ and $\overline{vy}$, and connect them to all the initial vertices;
        \item For every $i$, connect $vx$ and $\overline{vx}$ to $x_i$, and connect $vy$ and $\overline{vy}$ to $y_i$;
        \item For every $i \not= j$, add 2 vertices $x_i y_j$ and $\overline{x_i y_j}$ connected to $x_i$ and $y_j$;
        \item Connect the vertices $vx$, $\overline{vx}$, $vy, \overline{vy}$, $x_i y_j, \overline{x_i y_j}$ together except for each vertex and its overlined counterpart;
        \item Add a vertex $u$ connected to the vertices  $vx$, $\overline{vx}$, $vy, \overline{vy}$, $x_i y_j, \overline{x_i y_j}$.
    \end{itemize}

    Remark that with our notations, a vertex is the twin of its overlined counterpart, meaning they share the same neighbors.
    We call {\it central} vertices each of $vx$, $\overline{vx}$, $vy$, $\overline{vy}$, $x_i y_j$ and $\overline{x_i y_j}$.

   \begin{figure}[htb!]
   \centering
     \begin{subfigure}[t]{0.45\textwidth}
        \centering
         \includegraphics[width=\textwidth]{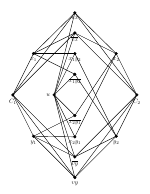}
         \caption{The graph $G'$ obtained from the union of two disjoint vertices colored respectively $1$ and $2$. The edges between the central vertices are not drawn.}
     \end{subfigure}
     \quad
     \begin{subfigure}[t]{0.45\textwidth}
        \centering
         \includegraphics[width=\textwidth]{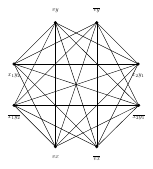}
         \caption{The induced subgraph of the central vertices.}
     \end{subfigure}
     \caption{Example of a graph obtained by the construction from an instance of $k$-{\sc Multicolored Independent Set}.}
\end{figure}

    We prove that $G'$ is a $2$-club by demonstrating in Table~\ref{tab:common_neighbors_reduction} a common neighbor between each pair of vertices.
    \begin{table}[htb!] 
        \centering
        \begin{tabular}{c|cccccccc}
             &  $C_i$ & $x_i$ & $y_i$ & $vx$ & $vy$ & $x_i y_j$ & $u$ \\ \hline
             $C_{i'}$ & $vx$ & $vx$ & $vy$ & $vx$ & $vy$ & $vx$ & $vx$ \\
             $x_{i'}$ &  & $vx$ & $x_{i'} y_j$ & $vx$ & $vx$ & $vx$ & $vx$\\
             $y_{i'}$ & & & $vy$ & $vy$ & $vy$ & $vy$ & $vy$ \\
             $vx$ &  & & & $u$ & $u$ & $u$ & $u$ \\
             $vy$ & & & & & $u$ & $u$ & $u$ \\
             $x_{i'} y_{j'}$ & & & & & & $u$ & $u$ \\
             $u$ & & & & & & & $u$
        \end{tabular}
        \caption{Common neighbors of $G'$ for each type of vertices, proving that $G'$ is of diameter $2$. Sub-diagonal cases are symmetric.}
        \label{tab:common_neighbors_reduction}
    \end{table}

    Given a colorful independent set $X$ of $G$, we define the following subset of $G'$ by $X' = X \cup \{u\}$.
    As $X$ is of size $k$, $X'$ is of size $k+1$.
    As $X$ is an independent set of $G$, then $X$ is still an independent set of $G'$.
    Furthermore $u$ is not connected to the original vertices.
    Therefore $X'$ is an independent set of $G'$.
    For every $i \in [1,k]$, every vertex of $V_i$ is dominated by $X_i$ (the vertex of $X$ colored $i$).
    Furthermore every central vertex is dominated by $u$.
    Thus $X'$ is an IDS of size $k+1$ of $G'$.

    Suppose that $X'$ is an IDS of size at most $k+1$ of $G'$.
    Suppose by contradiction that $u$ is not in $X'$.
    As $u$ should be dominated by $X'$, then there exists a central vertex $w$ vertex in $X'$.
    Its twin should also be in $X'$ (it is a general fact that twins are either both in an IDS either they are not in it).
    Let $i \in [1,k]$ and let us show that $X' \cap ( C_i \cup \{x_i, y_i\})$ is not empty.
    By construction the vertex $w$ cannot be adjacent to both $x_i$ and $y_i$ because the common neighbors of $x_i$ and $y_i$ are the vertices of $C_i$.
We can assume without loss of generality that 
$w$
is not adjacent to $x_i$.
    The vertex $x_i$ cannot be dominated by another central vertex because $w$ is already in $X'$ and is connected to every central vertex except $\overline{w}$.
    Thus $x_i$ must be dominated by itself or by a vertex of $C_i$.
    Thus $X' \cap ( C_i \cup \{x_i, y_i\})$ is not empty for every $i$.
    As the subsets $ C_i \cup \{x_i, y_i\}$ are disjoint and as these subsets are disjoint from the central vertices, then $|X'| \geq k + 2$ which contradicts that $X' \leq k+1$.
    We deduce that $u$ is in $X'$.

    Thus no central vertices can be in $X'$.
    For every $i \in [1,k]$, the vertices $x_i$ and $y_i$ must be dominated by themselves or by a vertex of $C_i$.
    If no vertex of $C_i$ is in $X'$, then $x_i$ and $y_i$ are in $X'$.
    By the same previous counting argument, it is impossible.
    Thus $X'$ contains a vertex from $C_i$ for every $i$.
    We deduce that there exists a colorful independent set in $G$.
\end{proof}

\begin{corollary}
{\sc Independent Domination Set} parameterized by 2ccvd is \textsf{para-NP}-hard.    
\end{corollary}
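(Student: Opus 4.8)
The plan is to combine Theorem~\ref{theorem:IDS_W1H_2clubs} with a trivial observation about the 2ccvd parameter on 2-clubs, which immediately gives para-NP-hardness. Recall that a problem parameterized by $\kappa$ is \textsf{para-NP}-hard if it is NP-hard already for some fixed constant value of $\kappa$. So it suffices to exhibit a graph class on which {\sc Independent Dominating Set} stays NP-hard while the 2ccvd number is bounded by a constant.

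First I would observe that if $G$ is a 2-club, then $2ccvd(G) = 0$: the empty set $S = \emptyset$ certifies that $G - S = G$ is (a disjoint union consisting of) a single 2-club. Hence the class of 2-clubs is exactly the class of graphs with 2ccvd number $0$, i.e.\ the slice of the parameterized problem at parameter value $0$.

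Next I would invoke Theorem~\ref{theorem:IDS_W1H_2clubs}, which shows that {\sc Independent Dominating Set} is $W[1]$-hard on 2-clubs. For the purposes of this corollary we actually only need the weaker fact that the reduction is from an NP-hard problem: the construction in the proof of Theorem~\ref{theorem:IDS_W1H_2clubs} is a polynomial-time reduction from {\sc $k$-Multicolored Independent Set} (which, as the unparameterized {\sc Multicolored Independent Set}, is NP-hard) to {\sc Independent Dominating Set} on 2-clubs, and it maps an instance with $k$ colors to the question of whether there is an IDS of size at most $k+1$. Since this size bound is part of the output instance and not a fixed parameter, the reduction witnesses that {\sc Independent Dominating Set} is NP-hard on 2-clubs.

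Putting these together: {\sc Independent Dominating Set} is NP-hard on the class of graphs with $2ccvd = 0$, so the problem parameterized by 2ccvd is NP-hard already for the fixed parameter value $0$, which is precisely the definition of \textsf{para-NP}-hardness. The only point that needs a sentence of care is making explicit that the hardness reduction of Theorem~\ref{theorem:IDS_W1H_2clubs} is polynomial-time and does not blow up the parameter, so that NP-hardness (not merely $W[1]$-hardness) transfers to the bounded-2ccvd slice; this is the main---and essentially the only---obstacle, and it is immediate from inspection of the construction.
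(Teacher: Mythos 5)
Your proposal is correct and is essentially the argument the paper intends: the construction of Theorem~\ref{theorem:IDS_W1H_2clubs} is a polynomial-time reduction from an \NP-hard problem onto $2$-clubs, which are exactly the graphs with 2ccvd equal to $0$, so the problem is \NP-hard at a fixed parameter value and hence \textsf{para-NP}-hard. The paper leaves this step implicit, and your only added care (noting that \NP-hardness, not just \WO-hardness, transfers because the reduction is polynomial and the slice has constant parameter) is exactly the right point to make explicit.
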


\section{Roman domination}

The notion of Roman domination was first introduced in \cite{stewart1999defend}. A {\it Roman dominating function} (RDF) over on a graph $G$ is a function $f: V(G) \to \{0,1,2\}$ such that for every vertex $v \in f^{-1}(\{0\})$, there exists $u \in N(v)$ such that $f(u) = 2$.
The weight $w(f)$ of such a function $f$ is defined as $\sum_{v \in V(G)} f(v)$.
There always exists such a function as the constant function $1$ is a RDF.
We denote by $\gamma_R(G)$ the minimum weight of a RDF in $G$. 
This notion is related to dominating sets because considering a dominated set $X$ in $G$, the function $f_X$ defined by $f_X(v) = 2$ if $v \in X$ and $f_X(v) = 0$ otherwise is a RDF.
Therefore $\gamma_R(G) \leq 2\gamma(G)$.
The corresponding
minimization problem is formally defined as follows.

\prob{$k$-Roman Domination}
{A graph $G$, an integer $k$;}
{Is there a RDF of weight at most $k$?}

Similarly we recall the notion of {\it independent Roman dominating function} (IRDF), introduced in \cite{cockayne2004roman}, over a graph $G=(V,E)$: It is a RDF such that $f^{-1}(\{1,2\})$ is an independent set.
For a survey on {\sc Independent Roman Domination} see \cite{padamutham2021complexity}.
We consider the associated mimization problem:

\prob{$k$-Independent Roman Domination}
{A graph $G$, an integer $k$;}
{Is there an IRDF of weight at most $k$?}

Another variant is the notion of {\it perfect Roman dominating function} (PRDF) introduced in \cite{henning2018perfect}, over a graph $G=(V,E)$: It is a RDF such that for every $v \in V(G)$ with $f(v) = 0$, there exists exactly one vertex $w \in N(v)$ with $f(w) = 2$.
The associated decision problem is \textsf{W[1]}-complete if parameterized by solution size and fixed parameter tractable if parameterized by clique-width \cite{mann2024perfect}.

\prob{$k$-Perfect Roman Domination}
{A graph $G$, an integer $k$;}
{Is there a PRDF of weight at most $k$?}






\subsection{Complexity of {\sc Roman domination} in 2-Clubs}





\begin{construction}
\label{construction:double_copy}
Let $G$ be an arbitrary graph. We construct the following graph $G'$.
Consider the union of two copies $G_1$ and $G_2$ of $G$.
For every $v \in G_1$, connect $v$ to $N[v']$ where $v'$ is the copy of $v$ in $G_2$. For every $v' \in G_2$, connect $v'$ to $N[v]$ in $G_1$. 

\noindent
Claim: If $G$ is a $2$-club then $G'$ is a $2$-club.
\end{construction}

\begin{proof}
Consider two vertices $v$ and $w$ of $G'$.
    If $v$ and $w$ are in the same copy of $G$, then there exists a common neighbor in this copy (as $G$ is a $2$-club).
    Otherwise, we can suppose without loss of generality that $v \in G_1$ and $w \in G_2$.
    We denote by $v'$ the copy of $v$ in $G_2$.
    As $G_2$ is a $2$-club, $v'$ and $w$ have a common neighbor that we call $u$.
    As $N[v] = N[v']$, then $u$ is also a neighbor of $v$.
    Thus $v$ and $w$ have a common neighbor. Therefore, $G'$ is a 2-club.
\end{proof}

\begin{figure}[h]
    \centering
    \includegraphics[width=0.4\textwidth]{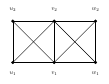}
    \caption{Example of the duplication construction: The depicted graph is the graph $G'$ obtained from a graph $G$ which is $3$-path with vertex set $\{u,v,w\}$.}
    \label{fig:IRD_reduction_example}
\end{figure}

\begin{theorem} \label{theorem:RD_W1}
    {\sc Roman Domination} is \textsf{W[1]}-hard even restricted to $2$-clubs.
\end{theorem}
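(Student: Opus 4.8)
The plan is to use Construction~\ref{construction:double_copy} as a parameterized reduction from {\sc Dominating Set} restricted to $2$-clubs, which is \WT-hard (and hence \WO-hard) by \cite{lokshtanov2013hardness}. Given a $2$-club $G$, let $G'$ be the graph output by Construction~\ref{construction:double_copy}; it is again a $2$-club by the claim of that construction. I would show that $\gamma_R(G') = 2\gamma(G)$. Since $2\gamma(G)$ is even, $\gamma_R(G')\le 2k$ holds if and only if $\gamma(G)\le k$, so $G\mapsto (G',2k)$ is a polynomial-time parameterized reduction and the theorem follows.

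The first step is a structural remark: for each vertex $z$ of $G$, its two copies $z\in G_1$ and $z'\in G_2$ are adjacent twins in $G'$, with $N_{G'}[z]=N_{G'}[z']$ equal to the set of all copies (in both $G_1$ and $G_2$) of the vertices of $N_G[z]$. This is immediate once the edges added by the construction are unwound, and every later step relies on it. For the inequality $\gamma_R(G')\le 2\gamma(G)$, take a minimum dominating set $D$ of $G$, assign value $2$ to the $G_1$-copy of each vertex of $D$ and value $0$ everywhere else: every vertex of $G_1$ is dominated inside $G_1$ by $D$, and every vertex $v'$ of $G_2$ is joined by a cross edge to all of $N_{G_1}[v]$, which meets $D$. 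Hence this function is an RDF of weight $2|D| = 2\gamma(G)$.

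For the reverse inequality, let $f$ be an optimal RDF of $G'$. Using the twin property I would first normalize $f$ so that $f^{-1}(1)$ is a union of twin pairs: if $f(v)=1$ while $f$ does not also equal $1$ on the twin $v'$, then $v$ has a neighbour of value $2$ (namely $v'$, or a common $2$-neighbour of $v$ and $v'$, since $N_{G'}(v)=N_{G'}(v')$), so $f(v)$ may be lowered to $0$, and since a value-$1$ vertex defends no $0$-vertex this preserves the Roman condition and contradicts optimality. Writing $|f^{-1}(1)| = 2|P|$ with $P\subseteq V(G)$ indexing the twin pairs, setting $V_2=f^{-1}(2)$, and letting $\pi$ send a copy to its vertex of $G$, one checks that $\pi(V_2)\cup P$ is a dominating set of $G$: a vertex of $G$ both of whose copies have value $0$ must have a $2$-neighbour in $G'$, whose $\pi$-image lies in $\pi(V_2)$, while any other vertex of $G$ has a copy in $V_2\cup f^{-1}(1)$. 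Therefore $w(f) = 2|V_2| + 2|P| \ge 2|\pi(V_2)\cup P| \ge 2\gamma(G)$, which establishes $\gamma_R(G')=2\gamma(G)$.

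I expect the main obstacle to be the normalization step, that is, checking that lowering a value-$1$ vertex to $0$ never destroys the Roman property; this is exactly where the adjacent-twin structure produced by Construction~\ref{construction:double_copy} is essential. The remaining claims (that the prescribed function is an RDF, and that $\pi(V_2)\cup P$ dominates $G$) are then routine case analyses.
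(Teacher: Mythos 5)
Your proposal is correct and follows essentially the same route as the paper: the identical duplication construction (Construction~\ref{construction:double_copy}) used as a reduction from {\sc Dominating Set} on $2$-clubs with threshold $2k$, including the same forward direction. The only divergence is in the backward direction, where the paper skips your twin-normalization and the exact identity $\gamma_R(G') = 2\gamma(G)$ by directly defining $X = \{v \in V(G) : f(v_1) + f(v_2) \geq 2\}$ and checking that $X$ dominates $G$ and has size at most $k$; both arguments are valid.
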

\begin{proof}
We use a reduction 
from {\sc Dominating Set} in 2-clubs, which is $W[2]$-hard \cite{lokshtanov2013hardness}. Consider a 2-club $G$ and the $2$-club $G'$ obtained from Construction~\ref{construction:double_copy}. We prove that 
$G$ has a dominating set of size at most $k$
if and only if
$G'$ has a RDF of weight at most $2k$.

Suppose that $G$ has a dominating set $X$ of size at most $k$.
    We define a function $f: V(G') \to \{0,1,2\}$ as follows: for every $v \in V(G)$ we set $f(v_1) = 0$ if $v \not\in X$ and $f(v_1) =2$ otherwise and $f(v_2) = 0$.
    The weight of $f$ is $2|X| \leq 2k$.
    Let us prove that $f$ is a RDF.
    For every $v \in V(G)$, there exists $w \in N[v]$ such that $w \in X$.
    As $w_1 \in N[v_1]$ and $w_1 \in N[v_2]$ and $w_1 \in V_2$, we deduce that $v_1$ and $v_2$ are dominated by $w_1$ such that $f(w_1) = 2$.
    Thus $f$ is a RDF of $G'$ of weight at most $2k$.

    Suppose that $G'$ has a RDF $f$ of weight at most $2k$.

    We define $X$ as the vertices of $V(G)$ such that $f(v_1) + f(v_2) \geq 2$.
    Let us prove that $X$ is dominating set of $G$.
    Let $u \in V(G) \not\in X$.
    Thus $f(v_1) = 0$ or $f(v_2) = 0$.
    If $f(v_1) = 0$, then $f(v_2) < 2$.
    In this case, there exists $w \in N(v)$ such that $f(w_1) = 2$ or $f(w_2) = 2$.
    We deduce that $w \in X$ is adjacent to $v$ in $G$.
    The case where $f(v_2) = 0$ is symmetric.
    We conclude that $X$ is a dominating set of $G$.

    Suppose by contradiction that $|X| > k$.
    Thus the weight of $f$ satisfies:
    \[
    w(f) = \sum_{v \in V(G)} (f(v_1) + f(v_2)) \geq \sum_{v \in X} (f(v_1) + f(v_2)) \geq \sum_{v \in X} 2 \geq 2|X| > 2k \;.
    \]
    This contradicts the fact that $f$ is of weight at most $2k$.
    We deduce that $|X| \leq k$. This completes the proof.
\end{proof}


\begin{corollary}
{\sc Roman Domination} parameterized by 2ccvd is \textsf{para-NP}-hard.    
\end{corollary}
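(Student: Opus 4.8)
The plan is to reduce the corollary to an ordinary NP-hardness statement. Recall that a problem parameterized by $\kappa$ is \textsf{para-NP}-hard as soon as it is NP-hard on instances in which $\kappa$ takes some fixed constant value; I will use the value $0$. First I would observe that every $2$-club $G$ has $2ccvd(G)=0$, since $G$ is itself a (single-component) disjoint union of $2$-clubs. Moreover the minimum weight $\gamma_R$ of a Roman dominating function is additive over connected components — an RDF of $G_1\sqcup G_2$ is exactly a pair consisting of an RDF of $G_1$ and an RDF of $G_2$, because the defining condition is local and neighborhoods do not cross components — so {\sc Roman Domination} restricted to graphs of $2ccvd$ equal to $0$ is the same problem as {\sc Roman Domination} restricted to $2$-clubs. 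Hence it suffices to show that {\sc Roman Domination} is NP-hard on $2$-clubs.

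For that, I would first record that {\sc Dominating Set} is NP-hard on $2$-clubs; this is the (polynomial-time) reduction underlying the $\WT$-hardness cited as \cite{lokshtanov2013hardness}, which is a polynomial-time many-one reduction from the NP-hard {\sc Dominating Set} problem. Then I would reuse, verbatim, the reduction given in the proof of Theorem~\ref{theorem:RD_W1}: from a $2$-club $G$ it produces the graph $G'$ of Construction~\ref{construction:double_copy}, which is again a $2$-club, together with the weight bound $2k$; Theorem~\ref{theorem:RD_W1} proves that $G$ has a dominating set of size at most $k$ if and only if $G'$ has an RDF of weight at most $2k$. This map is computable in polynomial time: $|V(G')| = 2|V(G)|$, the two copied edge sets and the edges joining the copies (each $v_1$ to $N[v_2]$) are all listable in polynomial time, and the rescaling is $k\mapsto 2k$. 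Composing the two polynomial-time reductions shows {\sc Roman Domination} is NP-hard on $2$-clubs, hence \textsf{para-NP}-hard parameterized by $2ccvd$.

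The only delicate point — and the main obstacle — is conceptual rather than computational: Theorem~\ref{theorem:RD_W1} by itself yields only $\WO$-hardness, which is a conditional lower bound and does \emph{not} formally imply \textsf{para-NP}-hardness. What the corollary really needs is an honest polynomial-time NP-hardness reduction, so one must be careful to (i) start from genuine NP-hardness of {\sc Dominating Set} on $2$-clubs, not merely $\WT$-hardness, and (ii) confirm that Construction~\ref{construction:double_copy} together with the $k\mapsto 2k$ rescaling runs in polynomial time (it does). Once these are in place the statement follows immediately, and the same argument pattern — NP-hardness on $2$-clubs, additivity of the invariant over connected components, and the triviality $2ccvd=0$ for $2$-clubs — also underlies the other \textsf{para-NP}-hardness entries in the summary table.
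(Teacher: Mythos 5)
Your proposal is correct and matches the paper's (implicit) argument: the corollary is meant to follow from Theorem~\ref{theorem:RD_W1} precisely because the reduction there is a polynomial-time many-one reduction from {\sc Dominating Set} on $2$-clubs (which is \NP-hard, not merely \WT-hard), so {\sc Roman Domination} is \NP-hard on graphs with $2ccvd=0$, which is the definition of \textsf{para-NP}-hardness. Your extra care about additivity over components and about not inferring \textsf{para-NP}-hardness from \WO-hardness alone is sound but does not change the route.
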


\subsection{Complexity of {\sc Independent Roman Domination} in 2-clubs}

By a similar proof
to the previous one we 
obtain the following Theorem.

\begin{theorem} \label{theorem:IRD_2clubs}
{\sc Independent Roman Domination} is \WO-hard in $2$-clubs.
\end{theorem}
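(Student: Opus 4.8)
The plan is to adapt the reduction used in Theorem~\ref{theorem:RD_W1} essentially verbatim, again reducing from {\sc Dominating Set} on $2$-clubs (which is \WT-hard, hence in particular \WO-hard). So I would take a $2$-club $G$, form the doubled graph $G'$ via Construction~\ref{construction:double_copy}, which is again a $2$-club by the claim proved there, and show that $G$ has a dominating set of size at most $k$ if and only if $G'$ has an \emph{independent} RDF of weight at most $2k$.

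For the forward direction, given a dominating set $X$ of $G$ with $|X|\le k$, I would use the very same function $f$: set $f(v_1)=2$ for $v\in X$, and $f(v_1)=0$, $f(v_2)=0$ otherwise. The argument that $f$ is an RDF of weight $2|X|\le 2k$ is identical to the earlier proof. The one new thing to check is independence of $f^{-1}(\{1,2\})$. Here $f^{-1}(\{1,2\})=\{v_1 : v\in X\}$, i.e. the support lives entirely inside the copy $G_1$. So I must argue that this vertex set is independent \emph{in $G'$}. That is the only subtle point: within $G_1$ the edges are just those of $G$, so I cannot assume $X$ is independent. The fix is to first replace $X$ by an \emph{inclusion-minimal} dominating set, which it isn't automatically, but minimality still doesn't force independence. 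So the cleanest route is: observe that in $G'$ we may instead place the $2$'s so that the support is independent by construction — e.g. put the $2$ on $v_1$ for $v$ in one ``half'' of $X$ and on $v_2$ for the other half, exploiting that $v_1$ and $v_2$ are never adjacent to each other? No — $v_1$ is adjacent to $v_2$ in $G'$ (since $v_2\in N[v_2]$ and we connect $v_1$ to $N[v'_2]$). Hence the genuinely careful observation is: two vertices $v_1, w_1$ in $G_1$ are adjacent in $G'$ iff $vw\in E(G)$; but also $v_1$ is adjacent to $w_2$ iff $w\in N_G[v]$. So a clean alternative is to put all $2$'s in $G_1$ but argue we only need this to work for \emph{some} dominating set realisable as an independent set in $G_1$ — which fails in general. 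Therefore I expect the actual trick (and the main obstacle) is a slightly different construction: modify Construction~\ref{construction:double_copy} by additionally \emph{deleting the internal edges of $G_1$} (keeping $G_1$ as an independent set) while keeping all cross edges and all of $G_2$; one checks this is still a $2$-club (cross edges through $G_2$ provide common neighbours) and that domination of $G$ now corresponds exactly to choosing an independent set of $2$-labelled vertices in $G_1$.

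Concretely, then, the steps I would carry out are: (1) state the modified construction $G''$ (take $G'$ from Construction~\ref{construction:double_copy} and remove all edges inside the copy $G_1$); (2) verify $G''$ is a $2$-club, reusing the common-neighbour table from the proof of the claim, noting the only pairs affected are pairs $v_1,w_1$ inside $G_1$, which now get a common neighbour in $G_2$ because $G$ is a $2$-club and $N_{G''}[v_1]\supseteq N_{G}[v]$ viewed in $G_2$; (3) forward direction: from a dominating set $X$ of $G$, define $f(v_1)=2$ for $v\in X$ and everything else $0$; since $G_1$ is edgeless in $G''$, $f^{-1}(\{1,2\})\subseteq V(G_1)$ is automatically independent, and domination of each $v_1$ and $v_2$ by some $w_1$ with $w\in N_G[v]\cap X$ works exactly as before, so we get an IRDF of weight $2|X|\le 2k$; (4) reverse direction: from an IRDF $f$ of $G''$ of weight $\le 2k$, set $X=\{v\in V(G): f(v_1)+f(v_2)\ge 2\}$ and repeat the counting/domination argument of Theorem~\ref{theorem:RD_W1} verbatim to conclude $|X|\le k$ and $X$ dominates $G$. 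The main obstacle, and the place where the proof genuinely differs from the RD case rather than being ``by a similar proof,'' is precisely ensuring independence of the support — which is why the deletion of $G_1$'s internal edges is needed — together with re-checking that this deletion does not break the diameter-$2$ property; everything else is a transcription of the previous argument.
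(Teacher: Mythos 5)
There is a genuine gap, and it sits exactly where you located the difficulty. In your modified graph $G''$ you delete all internal edges of $G_1$, but those are precisely the edges used to Roman-dominate the $G_1$-copies in the forward direction: after the deletion, a vertex $v_1$ with $v\notin X$ has all of its neighbours in $G_2$ (namely $\{u_2 : u\in N_G[v]\}$), and your assignment gives every vertex of $G_2$ the value $0$, so $v_1$ has no neighbour of value $2$ and $f$ is not even an RDF. Your claim that domination of each $v_1$ ``works exactly as before'' is therefore false. Moving the $2$'s onto the $G_2$-copies does not repair this, since $G_2$ keeps the edges of $G$ and independence of the support would then require $X$ itself to be independent. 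This reflects the deeper problem with the plan: reducing from plain {\sc Dominating Set} cannot give the forward implication at all, because in such a duplication gadget a weight-$2k$ \emph{independent} RDF forces the $2$-labelled vertices to behave like an independent dominating structure, and a small dominating set of $G$ gives no control on that; so the step ``$\gamma(G)\le k$ implies an IRDF of weight $2k$'' is exactly what remains unproved.

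The route the paper intends by ``a similar proof'' keeps Construction~\ref{construction:double_copy} unchanged and changes the \emph{source} problem: reduce from {\sc Independent Dominating Set} on $2$-clubs, which is \WO-hard by Theorem~\ref{theorem:IDS_W1H_2clubs}. Given an independent dominating set $X$ of $G$ with $|X|\le k$, set $f(v_1)=2$ for $v\in X$ and $0$ elsewhere; the support lies entirely in $G_1$, whose induced subgraph in $G'$ is a copy of $G$, so independence of $X$ gives independence of the support, and the RDF property and weight bound are verbatim from Theorem~\ref{theorem:RD_W1}. Conversely, from an IRDF of weight at most $2k$ define $X=\{v : f(v_1)+f(v_2)\ge 2\}$; domination and $|X|\le k$ are as in Theorem~\ref{theorem:RD_W1}, and $X$ is independent because if $vw\in E(G)$ then each of $v_1,v_2$ is adjacent in $G'$ to each of $w_1,w_2$, so positive values on both sides would violate independence of the support. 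Since the parameter is mapped $k\mapsto 2k$, \WO-hardness follows.
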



\begin{corollary}
    {\sc Independent Roman Domination} parameterized by 2ccvd is \textsf{para-NP}-hard.
\end{corollary}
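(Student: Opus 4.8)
My plan is to obtain the corollary from Theorem~\ref{theorem:IRD_2clubs} in the same way as the other \textsf{para-NP}-hardness corollaries of this section: every $2$-club $G$ satisfies $2ccvd(G)=0$, so {\sc Independent Roman Domination} restricted to instances with $2ccvd=0$ is exactly the problem restricted to $2$-clubs; and the reductions proving Theorem~\ref{theorem:IRD_2clubs} (composed with that of Theorem~\ref{theorem:IDS_W1H_2clubs}) are polynomial-time reductions from the \NP-hard {\sc $k$-Multicolored Independent Set} problem, so {\sc Independent Roman Domination} is \NP-hard already on $2$-clubs, hence \textsf{para-NP}-hard with respect to $2ccvd$. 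The substance is therefore the proof of Theorem~\ref{theorem:IRD_2clubs}, which the excerpt only asserts to follow by ``a similar proof'' to that of Theorem~\ref{theorem:RD_W1}; I sketch how I would fill it in.

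I would reduce from {\sc Independent Dominating Set} restricted to $2$-clubs, which is \WO-hard by Theorem~\ref{theorem:IDS_W1H_2clubs}. Given a $2$-club $G$, let $G'$ be the graph obtained from the two copies $G_1,G_2$ of $G$ by Construction~\ref{construction:double_copy}; by the claim proved there $G'$ is again a $2$-club, it has $2|V(G)|$ vertices, and I map the parameter $k\mapsto 2k$, so this is a polynomial (and FPT) reduction. The claim to establish is that $G$ has an independent dominating set of size at most $k$ if and only if $G'$ has an IRDF of weight at most $2k$.

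The one genuinely new ingredient, relative to the {\sc Roman Domination} argument, is the observation that in $G'$ the two copies $v_1\in G_1$ and $v_2\in G_2$ of every vertex $v$ of $G$ are \emph{true twins}: directly from the construction, $N_{G'}[v_1]=N_{G'}[v_2]=\{ w_1 : w\in N_G[v]\}\cup\{ w_2 : w\in N_G[v]\}$, and in particular $v_1v_2\in E(G')$. For the forward direction, from an independent dominating set $X$ of $G$ I define $f(v_1)=2$ for $v\in X$ and $f=0$ on every other vertex of $G'$: then $f^{-1}(\{1,2\})=\{ v_1 : v\in X\}$ is independent because $X$ is independent and $G_1$ is an isomorphic copy of $G$, and $f$ is a Roman dominating function because every $0$-valued vertex of $G'$ is a copy of some $v\in V(G)$ and, as $X$ dominates $G$, is adjacent in $G'$ to the copy $w_1$ of some $w\in N_G[v]\cap X$, which carries value $2$; the weight of $f$ is $2|X|\le 2k$. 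For the converse, let $f$ be an IRDF of $G'$ of weight at most $2k$. Since $v_1v_2\in E(G')$ and $f^{-1}(\{1,2\})$ is independent, at most one of $f(v_1),f(v_2)$ is nonzero for each $v$, so $f(v_1)+f(v_2)\in\{0,1,2\}$, with value $2$ exactly when one copy of $v$ carries a $2$. Put $X=\{ v\in V(G) : f(v_1)+f(v_2)=2\}$; then $w(f)\ge 2|X|$ forces $|X|\le k$. If $u\notin X$ then neither copy of $u$ has value $2$, so a copy of $u$ with value $0$ needs a $2$-valued $G'$-neighbor, which by the twin identity is $w_j$ for some $w\in N_G[u]\cap X$ with $w\ne u$; hence $X$ dominates $G$. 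Finally $X$ is independent: if $v,w\in X$ and $vw\in E(G)$, then the unique $2$-valued copy of $v$ and the unique $2$-valued copy of $w$ are adjacent in $G'$ in every case, because $v_1w_1$ and $v_2w_2$ are edges as copies of $vw$ while $v_1w_2$ and $v_2w_1$ are edges since $w\in N_G[v]$; this contradicts the independence of $f^{-1}(\{1,2\})$.

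The hard part will be precisely this double bookkeeping of the independence constraint: one must start from {\sc Independent Dominating Set} rather than {\sc Dominating Set} so that the function built in the forward direction has independent support, and, in the converse, one must convert independence of $f^{-1}(\{1,2\})$ in $G'$ into independence of the extracted set $X$ in $G$, which is exactly what the true-twin structure of the two copies $G_1,G_2$ makes possible. Everything else is a routine transcription of the proof of Theorem~\ref{theorem:RD_W1}, after which the corollary follows as indicated above.
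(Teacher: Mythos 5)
Your proposal is correct and matches the paper's (implicit) argument: the corollary follows because $2$-clubs have $2ccvd=0$ and the reduction chain establishing Theorem~\ref{theorem:IRD_2clubs} is polynomial-time from an \NP-hard source, and your filled-in proof of that theorem is exactly the ``similar proof'' the paper intends --- the duplication Construction~\ref{construction:double_copy} applied to {\sc Independent Dominating Set} on $2$-clubs (Theorem~\ref{theorem:IDS_W1H_2clubs}) instead of {\sc Dominating Set}, with the true-twin observation correctly handling the independence of the support in both directions.
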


\subsection{Complexity of {\sc Perfect Roman Domination} in $2$-clubs}

We reduce {\sc Perfect Roman Domination} from the following problem which has been proved to be \NP-complete \cite{Karp72}:

\prob{Exact Cover}
{A set of elements $U$, a set $S$ of subsets of $U$;}
{Is there an exact cover of $U$, that is a subset $S'$ of $S$ such that the sets in $S'$ are pairwise disjoint and the union of these sets is $U$?  }

\begin{theorem}
\label{theorem:PRD}
{\sc Perfect Roman Domination} is \NP-complete even restricted to $2$-clubs.
\end{theorem}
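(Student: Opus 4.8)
I would reduce from \textsc{Exact Cover}. Given an instance $(U,S)$ with $U = \{u_1,\dots,u_n\}$ and $S = \{S_1,\dots,S_m\}$, I would first build an auxiliary bipartite-like incidence graph $H$ on the vertex set $\{s_1,\dots,s_m\} \cup \{e_1,\dots,e_n\}$, putting an edge $s_j e_i$ whenever $u_i \in S_j$. This $H$ has diameter possibly larger than $2$, so I would then apply \autoref{construction:double_copy} (the double-copy construction) — but $H$ is not itself a $2$-club, so I cannot invoke the construction's claim directly. Instead I would first pad $H$ into a $2$-club by adding a small constant-size gadget: for instance a universal-type vertex $c$ adjacent to all of $\{s_j\}\cup\{e_i\}$ together with whatever additional structure is needed so that distances collapse to $2$ while a perfect Roman dominating function is forced to ``pay'' for $c$ in a controlled way, and to not let $c$ perfectly-dominate the element vertices for free. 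Then apply the double-copy construction to get a $2$-club $G'$, and argue about PRDFs on $G'$.

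\textbf{Key steps in order.} (1) Describe the gadget graph $G$ explicitly, verify it is a $2$-club, and apply \autoref{construction:double_copy} to obtain the $2$-club $G'$; note membership in \NP\ is immediate since verifying a PRDF is polynomial. (2) Forward direction: given an exact cover $S' \subseteq S$, define $f$ on $G'$ by assigning weight $2$ to the copies of the chosen set-vertices $\{s_j : S_j \in S'\}$ in one copy, weight $1$ to a bounded number of gadget vertices whose perfect domination cannot otherwise be arranged, and weight $0$ everywhere else; check that every weight-$0$ vertex has \emph{exactly one} weight-$2$ neighbour — here disjointness of the sets in $S'$ is exactly what guarantees each element vertex $e_i$ (in both copies) sees precisely one weight-$2$ set-vertex, and the ``perfect'' (exactly-one) condition of PRD is what makes the reduction go through where ordinary Roman domination would not. (3) Reverse direction: given a PRDF $f$ of weight at most the target $k$, show that the budget forces $f$ to place its weight-$2$ labels essentially only on set-vertices (the gadget is designed so any other configuration either fails the perfectness condition or exceeds the budget), that no element vertex can be assigned positive weight without blowing the budget, and hence the set of chosen set-vertices forms a cover; perfectness then forces this cover to be exact, since two chosen sets sharing an element $u_i$ would give $e_i$ two weight-$2$ neighbours. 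Conclude \autoref{theorem:PRD}.

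\textbf{Main obstacle.} The delicate part is the gadget design: I need $G$ to be a genuine $2$-club (so distances are all $\le 2$ before and, via the construction, after duplication), yet simultaneously rigid enough that an optimal PRDF is forced into the ``canonical'' shape corresponding to a subfamily of $S$, with the perfectness constraint translating precisely into disjointness. A naive universal vertex tends to let a single weight-$2$ label dominate too much, collapsing the instance; conversely, too much gadgetry inflates the optimum weight in a way that depends on $n$ and $m$ and breaks the clean threshold $k$. The core tension is that the double-copy construction is what buys the diameter-$2$ property but it also duplicates every element vertex, so the counting in the reverse direction must account for both copies of each $e_i$ being perfectly dominated; getting the target weight $k$ to line up exactly — likely $k = 2|S^\ast|$ for the optimum exact-cover size plus a fixed additive constant from the gadget — and proving optimal PRDFs cannot cheat by using weight-$1$ vertices on element vertices, is where the real work lies.
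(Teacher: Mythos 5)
Your proposal is a plan rather than a proof: the entire content of the theorem lies in the gadget you leave unspecified, and the two concrete commitments you do make both break. First, the target weight. \textsc{Exact Cover} is a pure existence problem; an instance $(U,S)$ carries no target cover size, and different exact covers of the same instance may have different cardinalities, so a threshold of the form $2|S^\ast|+O(1)$ tied to ``the optimum exact-cover size'' is not something your reduction can compute (nor even define cleanly). The paper's construction resolves exactly this tension by attaching to \emph{every} set vertex $s_i$ a private gadget ($r_i$, plus blown-up independent sets $S_i$ and $T_i$ of size $w+1$ with $w=2|S|+3$) that forces each set gadget to spend weight exactly $2$ whether or not the set is chosen: $f(s_i)=2$ if $s_i$ is in the cover and $f(r_i)=2$ otherwise. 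This makes the threshold $2|S|+3$, independent of the cover, and the budget argument then forces all element vertices to weight $0$, so the ``exactly one weight-$2$ neighbour'' condition on the blown-up element sets $U_j$ becomes precisely disjointness plus coverage. Without some mechanism of this kind your reverse direction has no fixed number to count against, which is the obstacle you yourself flag but do not overcome.

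Second, your use of Construction~\ref{construction:double_copy} is misplaced. That construction only preserves the $2$-club property (``if $G$ is a $2$-club then $G'$ is a $2$-club''); it does not create one, so once you have padded the incidence graph $H$ into a $2$-club the duplication step buys you nothing, while it actively hurts the perfectness bookkeeping: it duplicates every element vertex and creates true twins, so a single weight-$2$ set vertex now perfectly dominates (or over-dominates) both copies, and two weight-$2$ twins would immediately violate the exactly-one condition for their common $0$-neighbours. The duplication trick is what the paper uses for \textsc{Roman Domination} (Theorem~\ref{theorem:RD_W1}), where only ``at least one'' domination is needed; for \textsc{Perfect Roman Domination} the paper instead builds the $2$-club directly, using three hub vertices $a,b,c$ with carefully split neighbourhoods (no single universal vertex) and the large independent sets $U_j,S_i,T_i$ to ensure diameter $2$ while making it impossible, within the budget $2|S|+3$, to place a $2$ on $a$, $b$, $c$ or inside the blown-up sets. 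Designing that rigidity is the actual proof, and it is missing from your write-up.
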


\begin{proof}

Let $U = \{u_1, \ldots,u_n\}$ and $S=\{s_1, \ldots, s_k\}$ be an instance of {\sc Exact Cover}. We define $w = 2k+3$ and a graph $G$ as follows:

\begin{itemize}

\item Add one vertex each set of $S$;
\item Add an independent set $U_i$ with $w+1$ vertices for every element $u_i$ of $U$;
\item Connect $s_i$ to $U_j$ for every $i,j$ such that $u_j \in s_i$;
\item Add an independent set $S_i$ with $w+1$ vertices for every set $s_i \in S$ and connect $s_i$ to $S_i$;
\item Add a vertex $r_i$ for every set $s_i \in S$ and connect $r_i$ to $s_i$ and $r_i$ to $S_i$;
\item Add an independent set $T_i$ with $w+1$ vertices for every set $s_i$ of $S$ and connect $r_i, S_i$ and $s_i$ to $T_i$;
\item Add a vertex $a$ connected to every $r_i, S_i, s_i$ and $U_j$;
\item Add a vertex $b$ connected to every $T_i, r_i, S_i, s_i$;
\item Add a vertex $c$ connected to every $T_i$ and $U_j$.
\end{itemize}

\begin{figure}[htb!]
\centering
	\begin{tikzpicture}[yscale=-1, scale = 1.7]

        \def\vertexIdSize{0.3}; 
        \def\weightSize{0.4}; 
        \def\edgeWidth{0.5}; 
        \def\vertexSize{0.2}; 
    
		\coordinate (v0) at (4.9, 2.8);
		\coordinate (v1) at (4.9, 3.85);
		\coordinate (v2) at (4.9, 4.9);
		\coordinate (v3) at (6.3, 2.8);
		\coordinate (v4) at (6.3, 3.85);
		\coordinate (v5) at (6.31, 4.9);
		\coordinate (v6) at (4.2, 2.63);
		\coordinate (v7) at (4.2, 2.98);
		\coordinate (v8) at (3.5, 2.8);
		\coordinate (v9) at (4.2, 3.68);
		\coordinate (v10) at (4.2, 4.03);
		\coordinate (v11) at (3.5, 3.85);
		\coordinate (v12) at (4.2, 4.73);
		\coordinate (v13) at (3.5, 4.9);
		\coordinate (v14) at (4.2, 5.08);
		\coordinate (v15) at (2.45, 3.85);
		\coordinate (v16) at (4.2, 1.75);
		\coordinate (v17) at (5.25, 5.6);

		\draw[line width = \edgeWidth, color = Neutral] (v1) --  (v3) ;
		\draw[line width = \edgeWidth, color = Neutral] (v1) --  (v5) ;
		\draw[line width = \edgeWidth, color = Neutral] (v4) --  (v0) ;
		\draw[line width = \edgeWidth, color = Neutral] (v0) --  (v3) ;
		\draw[line width = \edgeWidth, color = Neutral] (v4) --  (v2) ;
		\draw[line width = \edgeWidth, color = Neutral] (v2) --  (v5) ;
		\draw[line width = \edgeWidth, color = Neutral] (v7) --  (v0) ;
		\draw[line width = \edgeWidth, color = Neutral] (v0) --  (v6) ;
		\draw[line width = \edgeWidth, color = Neutral] (v6) --  (v7) ;
		\draw[line width = \edgeWidth, color = Neutral] (v8) --  (v6) ;
		\draw[line width = \edgeWidth, color = Neutral] (v8) --  (v7) ;
		\draw[line width = \edgeWidth, color = Neutral] (v8) --  (v0) ;
		\draw[line width = \edgeWidth, color = Neutral] (v9) --  (v10) ;
		\draw[line width = \edgeWidth, color = Neutral] (v10) --  (v11) ;
		\draw[line width = \edgeWidth, color = Neutral] (v11) --  (v1) ;
		\draw[line width = \edgeWidth, color = Neutral] (v11) --  (v9) ;
		\draw[line width = \edgeWidth, color = Neutral] (v9) --  (v1) ;
		\draw[line width = \edgeWidth, color = Neutral] (v1) --  (v10) ;
		\draw[line width = \edgeWidth, color = Neutral] (v2) .. controls (4.55, 4.81) ..  (v12) ;
		\draw[line width = \edgeWidth, color = Neutral] (v12) --  (v13) ;
		\draw[line width = \edgeWidth, color = Neutral] (v13) --  (v14) ;
		\draw[line width = \edgeWidth, color = Neutral] (v14) --  (v12) ;
		\draw[line width = \edgeWidth, color = Neutral] (v2) --  (v13) ;
		\draw[line width = \edgeWidth, color = Neutral] (v14) --  (v2) ;
		\draw[line width = \edgeWidth, color = Neutral] (v15) --  (v8) ;
		\draw[line width = \edgeWidth, color = Neutral] (v15) --  (v6) ;
		\draw[line width = \edgeWidth, color = Neutral] (v15) --  (v7) ;
		\draw[line width = \edgeWidth, color = Neutral] (v15) .. controls (3.83, 3.61) ..  (v0) ;
		\draw[line width = \edgeWidth, color = Neutral] (v15) --  (v12) ;
		\draw[line width = \edgeWidth, color = Neutral] (v15) .. controls (2.93, 5.26) ..  (v14) ;
		\draw[line width = \edgeWidth, color = Neutral] (v15) --  (v13) ;
		\draw[line width = \edgeWidth, color = Neutral] (v15) .. controls (4.37, 4.36) ..  (v2) ;
		\draw[line width = \edgeWidth, color = Neutral] (v8) --  (v16) ;
		\draw[line width = \edgeWidth, color = Neutral] (v16) --  (v3) ;
		\draw[line width = \edgeWidth, color = Neutral] (v17) --  (v14) ;
		\draw[line width = \edgeWidth, color = Neutral] (v17) --  (v12) ;
		\draw[line width = \edgeWidth, color = Neutral] (v17) --  (v2) ;
		\draw[line width = \edgeWidth, color = Neutral] (v17) --  (v5) ;

		\node[scale = \vertexSize, label={[text=white, scale=\vertexIdSize]center: }, nodes={Neutral}{}{}{}] at  (v0)  {};
		\node[label={[scale=\weightSize]below: s1}] at (v0) {};
		\node[scale = \vertexSize, label={[text=white, scale=\vertexIdSize]center: }, nodes={Neutral}{}{}{}] at  (v1)  {};
		\node[label={[scale=\weightSize]below: s2}] at (v1) {};
		\node[scale = \vertexSize, label={[text=white, scale=\vertexIdSize]center: }, nodes={Neutral}{}{}{}] at  (v2)  {};
		\node[label={[scale=\weightSize]below: s3}] at (v2) {};
		\node[scale = \vertexSize, label={[text=white, scale=\vertexIdSize]center: }, nodes={Blue}{}{}{}] at  (v3)  {};
		\node[label={[scale=\weightSize]right: $U_1$}] at (v3) {};
		\node[scale = \vertexSize, label={[text=white, scale=\vertexIdSize]center: }, nodes={Blue}{}{}{}] at  (v4)  {};
		\node[label={[scale=\weightSize]right: $U_2$}] at (v4) {};
		\node[scale = \vertexSize, label={[text=white, scale=\vertexIdSize]center: }, nodes={Blue}{}{}{}] at  (v5)  {};
		\node[label={[scale=\weightSize]right: $U_3$}] at (v5) {};
		\node[scale = \vertexSize, label={[text=white, scale=\vertexIdSize]center: }, nodes={Blue}{}{}{}] at  (v6)  {};
		\node[label={[scale=\weightSize]above: $S_1$}] at (v6) {};
		\node[scale = \vertexSize, label={[text=white, scale=\vertexIdSize]center: }, nodes={Neutral}{}{}{}] at  (v7)  {};
		\node[label={[scale=\weightSize]below: $r_1$}] at (v7) {};
		\node[scale = \vertexSize, label={[text=white, scale=\vertexIdSize]center: }, nodes={Blue}{}{}{}] at  (v8)  {};
		\node[label={[scale=\weightSize]left: $T_1$}] at (v8) {};
		\node[scale = \vertexSize, label={[text=white, scale=\vertexIdSize]center: }, nodes={Blue}{}{}{}] at  (v9)  {};
		\node[label={[scale=\weightSize]above: $S_2$}] at (v9) {};
		\node[scale = \vertexSize, label={[text=white, scale=\vertexIdSize]center: }, nodes={Neutral}{}{}{}] at  (v10)  {};
		\node[label={[scale=\weightSize]below: $r_2$}] at (v10) {};
		\node[scale = \vertexSize, label={[text=white, scale=\vertexIdSize]center: }, nodes={Blue}{}{}{}] at  (v11)  {};
		\node[label={[scale=\weightSize]left: $T_2$}] at (v11) {};
		\node[scale = \vertexSize, label={[text=white, scale=\vertexIdSize]center: }, nodes={Blue}{}{}{}] at  (v12)  {};
		\node[label={[scale=\weightSize]above: $S_3$}] at (v12) {};
		\node[scale = \vertexSize, label={[text=white, scale=\vertexIdSize]center: }, nodes={Blue}{}{}{}] at  (v13)  {};
		\node[label={[scale=\weightSize]left: $T_3$}] at (v13) {};
		\node[scale = \vertexSize, label={[text=white, scale=\vertexIdSize]center: }, nodes={Neutral}{}{}{}] at  (v14)  {};
		\node[label={[scale=\weightSize]below: $r_3$}] at (v14) {};
		\node[scale = \vertexSize, label={[text=white, scale=\vertexIdSize]center: }, nodes={Neutral}{}{}{}] at  (v15)  {};
		\node[label={[scale=\weightSize]left: $b$}] at (v15) {};
		\node[scale = \vertexSize, label={[text=white, scale=\vertexIdSize]center: }, nodes={Neutral}{}{}{}] at  (v16)  {};
		\node[label={[scale=\weightSize]below: $c$}] at (v16) {};
		\node[scale = \vertexSize, label={[text=white, scale=\vertexIdSize]center: }, nodes={Neutral}{}{}{}] at  (v17)  {};
		\node[label={[scale=\weightSize]below: $a$}] at (v17) {};

	\end{tikzpicture}

\caption{Example of a graph obtained by the construction of Theorem~\ref{theorem:PRD} for the {\sc Exact Cover} instance $(U=\{u_1, u_2, u_3\}, S=\{s_1=\{u_1, u_2\}, s_2 =\{u_1, u_3\}, s_3=\{u_2, u_3\})$. For readability, independent sets $U_j$, $S_i$ and $T_i$ are represented by a single blue vertex and some edges incident to $a,b$ and $c$ are not described.}
        \label{fig:construction_PRD}
    \end{figure}
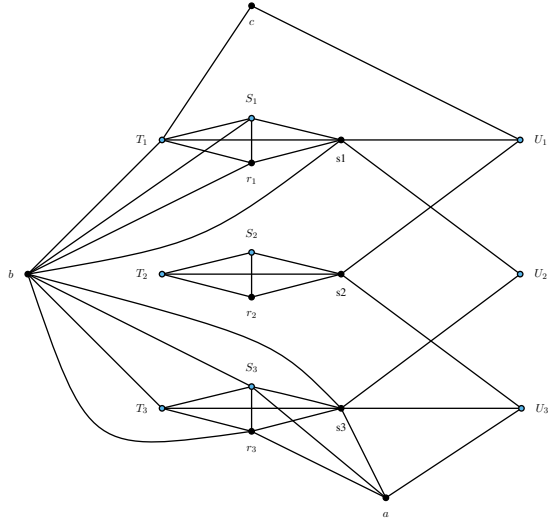

The graph $G$ is of diameter at most $2$.
We prove this by giving a common neighbor for every pair of vertices of $G$ in Table~\ref{table:prd_common}.

\begin{table}[htb!] 
        \centering
        \begin{tabular}{c|cccccccc}
             &  $U_j$ & $s_i$ & $S_i$ & $r_i$ & $T_i$ & $a$ & $b$ & $c$ \\ \hline
             $U_{j'}$ & $a$ & $a$ & $a$ & $a$ & $c$ & $a$ & $s_i$ & $c$  \\
             $s_{i'}$ &  & $a$ & $a$ & $a$ & $b$ & $a$ & $b$ & $u_j \in s_{i'}$ \\
             $S_{i'}$ & & & $a$ & $a$ & $b$ & $a$ & $b$ & $T_{i'}$ \\
             $r_{i'}$ &  & & & $a$ & $b$ & $a$ & $b$ & $T_{i'}$ \\
             $T_i$ & & & & & $c$ & $r_i$ & $b$ & $c$ \\
             $a$ & & & & & & $a$ & $s_1$ & $U_1$ \\
             $b$ & & & & & & & $b$ & $T_1$ \\
             $c$ & & & & & & & & $c$
        \end{tabular}
        \caption{Common neighbors of $G$ for each type of vertices, proving that $G'$ is of diameter $2$. Sub-diagonal cases are symmetric.}
\label{table:prd_common}
    \end{table}

Let us prove that $(U,S)$ has a solution of size at most $k$ if and only if $G$ has a PRDF of size at most $w=2k+3$.
Suppose $(U,S)$ has a solution $X$.
We define the following function $f: V(G) \to \{0,1,2\}$.
For every $i \in X$, $f(s_i) = 2$.
For every $i \not\in X$, $f(r_i) = 2$.
Furthermore we set $f(a) = f(b) = f(c) = 1$.
For every other vertex $v$, we set $f(v) = 0$.
The weight of $f$ is $2k+3$.
Let us prove that $f$ is a RDF.
For every $i$, each vertex of $S_i$ and $T_i$ is only dominated by $s_i$ and $r_i$.
For every $i \in X$, $r_i$ is only dominated by $s_i$.
For every $i \not\in X$, $s_i$ is only dominated by $r_i$.
For every $j$, there exists $i$ such that $u_j \in s_i$, thus each vertex of $U_j$ is only dominated by $s_i$.
We deduce that $f$ is a PRDF.

Suppose $G$ has a PRDF $f$ of weight at most $w=2k+3$ and let us prove that $(U,S)$ has an exact cover. Suppose by contradiction that $f(b) = 2$.
Then $f(s_i) < 2$, $f(S_i) < 2$, $f(r_i) < 2$ for every $i$, otherwise one of this vertex $v$ would have value $f(v) = 2$ and then all vertices of $T_i$ would be adjacent to $v$ and to $b$, so all vertices would have value at least $1$.
As $|T_i| > w$, it would contradict that $f$ is of weight at most $w$.
In the same way, we deduce that $f(a) < 2$ (otherwise all vertices of $S_1$ would be adjacent to $a$ and $b$ which have value $2$).
We also deduce that $f(c) < 2$.
Thus for every $j$, every vertex of $U_j$ is not adjacent to a vertex valued $2$ and is therefore of value at least $1$, a contradiction.
We deduce that $f(b) \leq 1$.
    
In the same way, let us prove that $f(c) \leq 1$.
Suppose by contradiction that $f(c) =2$.
Then $f(a) < 2$ (because $c$ and $a$ are connected to every $U_j$ ).
Furthermore every vertex $r_i, S_i, s_i$ is of value at most $1$ because they are all connected to $T_i$.
As $S_i$ must contain a vertex valued $0$ (otherwise the weight of $f$ would be greater than $w+1$), this vertex can only be covered by a $2$ situated in $T_i$.
    We deduce that every $T_i$ has a $2$.
    As $b$ is connected to every $T_i$, it cannot be assigned to $0$ and is thus assigned $1$.
    It is not possible that $f(a) = 0$, otherwise there would exist a vertex $v$ in $U_j$ assigned to $2$.
    In this case the weight is at least $f(c) + f(b) + f(a) + f(v) + \sum_{i} \sum_{v \in T_i} f(v) \geq 2 + 1 +2 + 2k = 4 + 2k$.
    This contradicts the upper bound of $w(f)$.
    We deduce that $f(b) \leq 1$.

    As $f(b) \leq 1$ and $f(c) \leq 1$ and as every $T_i$ must contain a $0$, then for every $i$, $r_i$, $s_i$ or a vertex of $S_i$ is assigned to $2$.
    Thus the weight of $f$ is at least $f(a) + f(b) + f(c) + \sum_{i} f(r_i) + f(s_i) + f(S_i) \geq 3 + 2k$.
    We deduce that every other vertices are assigned to $0$, in particular the vertices of $U_j$ for every $j$.

    We define $X$ as $\{ i: f(s_i) = 2\}$.
    Let us prove that $X$ is an exact cover of $U$.
    Let $u_j \in U_j$.
    As $f(u_j) = 0$ and as $u_j$ is connected to $T_i$ (which are assigned to $0$), to $a$ and $c$ (which are assigned to $1$) and to some vertices $s_i$, thus there exists $s_i$ with $f(s_i) = 2$ which is connected to $u_j$.
    As $f$ is a perfect roman domination, we deduce that there exists exactly one vertex $s_i$ with $f(s_i) = 2$ which is connected to $u_j$.
    We conclude that $X$ is an exact cover of $U$.

    Finally, $(U,S)$ has an exact cover if and only if $G$ has a PRDF of weight at most $2k+3$.
    We conclude that {\sc Perfect Roman Domination} is \NP-complete in 2-clubs.
\end{proof}


\begin{corollary}
    {\sc Perfect Roman Domination}-{\sc 2CCVD} is \textsf{para-NP}-hard.
\end{corollary}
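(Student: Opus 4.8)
The plan is to derive this corollary directly from Theorem~\ref{theorem:PRD} by observing that membership in the class of 2-clubs forces the parameter $2ccvd$ to vanish. First I would recall that a graph $G$ of diameter at most $2$ is, by definition, a $2$-club, hence already a (single-component) disjoint union of $2$-clubs; consequently no vertices need to be deleted and $2ccvd(G) = 0$. In particular, the graph $G$ produced by the construction in the proof of Theorem~\ref{theorem:PRD} — which was shown there, via Table~\ref{table:prd_common}, to have diameter at most $2$ — is an instance with $2ccvd(G) = 0$.

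Next I would invoke Theorem~\ref{theorem:PRD}, which states that {\sc Perfect Roman Domination} is \NP-complete even when restricted to $2$-clubs. Combined with the previous observation, this says exactly that {\sc Perfect Roman Domination} is already \NP-hard on the sub-family of inputs for which the parameter $2ccvd$ equals the fixed constant $0$. By the standard definition, a parameterized problem that is \NP-hard for some fixed value of its parameter is \textsf{para-NP}-hard; hence {\sc Perfect Roman Domination}-{\sc 2CCVD} is \textsf{para-NP}-hard, completing the argument.

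I do not anticipate a genuine obstacle here, since the corollary is an immediate consequence of Theorem~\ref{theorem:PRD} together with the trivial bound $2ccvd(G)=0$ for $2$-clubs; the only point that warrants a line of justification is precisely this identification of ``restricted to $2$-clubs'' with ``parameter value $0$'', which is why the construction in Theorem~\ref{theorem:PRD} was built to be a $2$-club in the first place. (The same remark applies verbatim to the earlier corollaries following Theorems~\ref{theorem:IDS_W1H_2clubs}, \ref{theorem:RD_W1} and \ref{theorem:IRD_2clubs}, so the proof pattern is uniform.)
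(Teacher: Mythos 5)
Your proposal is correct and matches the paper's (implicit) reasoning: the corollary is stated as an immediate consequence of Theorem~\ref{theorem:PRD}, precisely because a $2$-club has $2ccvd = 0$, so \NP-hardness on $2$-clubs is \NP-hardness at parameter value $0$, which is the definition of \textsf{para-NP}-hardness. Nothing is missing.
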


\section{Fixed-Parameter Sub-Exponential Algorithms}

Despite its W[2]-hardness, the {\sc Dominating Set} problem is solvable in sub-exponential time in 2-clubs. In fact, the minimum size of a dominating set in a 2-club is in $\mathcal{O}(\sqrt{n\log{n}})$. This property, which is based on the fact that the neighborhood of any vertex in a 2-club is a dominating set, was used by Mertzios and Spirakis to solve the {\sc 3-Coloring} problem in sub-exponential time in 2-clubs \cite{MertziosS16}. To see this: once a dominating set $D$ of size $c\sqrt{n\log{n}}$ is found, a simple brute-force {\sc 3-Coloring} 
algorithm consists of enumerating all the possible $\mathcal{O}(3^{|D|})$ 3-colorings of the elements of $D$, which restricts the number of colors of each other vertex (in $G-D$) to at most two. Completing the 3-coloring, or reporting a No-instance, consists of solving an instance of {\sc 2-List Coloring} in polynomial-time via a simple reduction to 2-SAT. More recently, this result was further improved by Debski \textit{et al.} \cite{DebskiPR22} where {\sc 3-Coloring} and {\sc 3-List Coloring} were shown to be solvable in $2^{\mathcal{O}(n^{\frac{1}{3}}\log^2{n})}$.

Let $(G,k)$ be an instance of {\sc 3-Coloring} parameterized by 2ccvd, and let $S$ be such that $G-S$ is a disjoint union of 2-clubs, with $|S|\leq k$. In this case, a simple {\sc 3-Coloring} algorithm consists of enumerating all the $\mathcal{O}(3^k)$ possible colorings of $G[S]$ and then, in each 2-club $C$ of $G-S$ we use the algorithm described in \cite{DebskiPR22} to complete the 3-coloring, if possible. 
Overall, we obtain the following.


\begin{theorem}
{\sc 3-Coloring}, parameterized by 2ccvd, is solvable in
$\mathcal{O}(3^k 2^{cn^{1/3}\log^2(n)})$ time.
\end{theorem}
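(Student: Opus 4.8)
The plan is to reduce the problem, branch by branch, to polynomially many instances of {\sc 3-List Coloring} on 2-clubs, each of which is handled by the algorithm of Debski \textit{et al.} \cite{DebskiPR22}. First I would (if $S$ is not supplied with the input) compute a 2-club cluster vertex deletion set $S$ with $|S|\le k$ in $\mathcal{O}^*(3^k)$ time using the branching procedure described in the Preliminaries, and write $C_1,\ldots,C_p$ for the 2-clubs forming $G-S$, so that $\sum_{i=1}^{p}|V(C_i)|\le n$ and each $C_i$ has diameter at most $2$.

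Next I would enumerate all $3^{|S|}\le 3^k$ maps $c\colon S\to\{1,2,3\}$. For each such $c$ I first check in polynomial time that $c$ is a proper colouring of $G[S]$, discarding it otherwise. Given a surviving $c$, for every component $C_i$ and every vertex $v\in V(C_i)$ I define the list $L(v)=\{1,2,3\}\setminus\{c(w):w\in N(v)\cap S\}$. This yields, for each $i$, an instance $(C_i,L)$ of {\sc 3-List Coloring} whose underlying graph is a 2-club, and I would then run the algorithm of \cite{DebskiPR22} on each $(C_i,L)$. If for some branch $c$ all $p$ list-colouring instances are feasible, their colourings together with $c$ form a proper 3-colouring of $G$ and the algorithm answers Yes; if no branch succeeds, it answers No.

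Correctness is the routine direction: a proper 3-colouring of $G$ restricts on $S$ to one of the enumerated maps $c$ and on each $C_i$ to an $L$-colouring for the lists just defined, while conversely gluing a valid $c$ with valid $L$-colourings of the $C_i$ respects every edge of $G$ (edges inside $S$ by the check on $c$, edges inside a $C_i$ by properness of the $L$-colouring, and edges between $S$ and $C_i$ by the definition of $L$). For the running time, each branch costs $\sum_{i=1}^{p}2^{\mathcal{O}(|V(C_i)|^{1/3}\log^2|V(C_i)|)}$ plus polynomial overhead; since $p\le n$ and $|V(C_i)|\le n$, this is at most $2^{c\,n^{1/3}\log^2 n}$ after enlarging the constant $c$ slightly to absorb the factor $n$ (using $n\le 2^{n^{1/3}\log^2 n}$ for $n$ large). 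Multiplying by the $3^k$ branches gives the claimed bound $\mathcal{O}(3^k 2^{c\,n^{1/3}\log^2 n})$.

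I expect the only point that needs care, rather than a genuine obstacle, to be checking that the precolouring of $S$ is correctly pushed into list constraints on the $C_i$ and that the 2-club property of each $C_i$ — which is exactly what licenses the use of \cite{DebskiPR22} — is preserved; both hold by construction, since $C_i$ is an induced component of $G-S$ and is a 2-club by the defining property of the deletion set $S$.
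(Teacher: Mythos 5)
Your proposal is correct and follows essentially the same route as the paper: enumerate the $3^{k}$ colourings of $S$, translate the precolouring into list constraints, and invoke the Debski et al.\ subexponential algorithm for {\sc 3-List Coloring} on each 2-club of $G-S$. Your write-up merely adds details the paper leaves implicit (computing $S$ in $\mathcal{O}^{*}(3^{k})$ time and summing the per-club costs), so there is nothing to flag.
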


A similar result can be obtained 
for the 3ccvd parameter, due to a sub-exponential algorithm for {\sc 3-List Coloring} that runs in sub-exponential time on diameter-3 graphs \cite{DebskiPR22}.

In general, and because of the Exponential-Time Hypothesis (ETH), solvability in sub-exponential time when a parameter is fixed (or for each constant value of the parameter) can be interesting by itself, as an analogy to solvability in polynomial-time. However, it is not clear yet whether a complexity hierarchy exists that is analogous to the W-hierarchy. 
On the other hand, one can show that, modulo the ETH, some classical problems are not solvable by fixed-parameter sub-exponential algorithms with respect to 2ccvd.
Notable examples include {\sc Vertex Cover} and {\sc Independent Set}. To see this, let $(G,k)$ be an arbitrary instance of {\sc Vertex Cover}. If the diameter of $G$ is greater than two, we construct an equivalent diameter-two instance $(G',k+1)$ simply by adding a vertex $s$ that is adjacent to all vertices of $G$ (in $G'$). The size of the vertex cover increases by 1 simply because $G$ is not a complete graph (being of diameter greater than $2$). This shows that solving {\sc Vertex Cover} by a sub-exponential algorithm on 2-clubs results in solving it in sub-exponential time in general, which is impossible unless the ETH fails \cite{impagliazzo2001problems}. The same reduction applies to {\sc Independent Set} but with the same value of 
$k$.


The above simple reduction poses more questions. Which problems are solvable via fixed-parameter subexponential algorithms when parameterized by 2ccvd? We showed in another manuscript (in preparation) that the size of a minimum connected dominating set is bounded above by $2\sqrt{n}\log{n}$ in 2-clubs, which leads to solving {\sc Connected Dominating Set} via a sub-exponential algorithm in 2-clubs. The same is known for {\sc Dominating Set}. Is any of the two problems solvable via a fixed-parameter subexponential algorithm w.r.t. 2ccvd? We believe this is an interesting open problem.

Another interesting scenario where several problems are solvable via fixed-parameter sub-exponential algorithms is when diameter-two is replaced with planarity.
This follows for the fact that many \NP-hard problems are solvable in sub-exponential time on planar graphs due to the planar separator theorem \cite{LiptonT80}. The same applies to several other parameters including distance to $P_t$-free and Broom-free graphs \cite{BacsoLMPTL19}.

Motivated by the above, one can define the class of problems solvable via fixed-parameter sub-exponential algorithms, which can possibly be named \textit{Fixed-Parameter Sub-exponential} or \FPSUB. 
Obviously $\FPT \subseteq \FPSUB$, but what about higher parameterized complexity classes? We already observed that {\sc Dominating Set} is \WO-hard w.r.t 2ccvd, so it does not seem possible to place \FPSUB\ in the W-hierarchy. We believe this classification is worth exploring.




\bibliography{references}

\end{document}